\newcommand{\AtomicUpdates}{\ensuremath{{\cal U}}}
\newcommand{\XP}{\ensuremath{\mathit{XP}}}
\newcommand{\semantics}[1]{\ensuremath{[\![ #1 ]\!]}}
\newcommand{\policy}{\ensuremath{\mathcal{P}}}
\newcommand{\powerset}{\mathcal{P}}
\newcommand{\ktext}{\ensuremath{\mathsf{text}}}
\newcommand{\coNP}{\ensuremath{\textsc{coNP}}}
\newcommand{\PTIME}{\ensuremath{\textsc{ptime}}}
\newcommand{\EXPTIME}{\ensuremath{\textsc{exptime}}}
\newcommand{\locpath}[2]{\mathit{#1}\:{:}{:}\:{#2}}
\newcommand{\udelete}[1]{\ensuremath{\mbox{\it delete}(#1)}}
\newcommand{\uupdate}[2]{\ensuremath{\mbox{\it update}(#1,#2)}}
\newcommand{\uinsert}[2]{\ensuremath{\mbox{\it insert}(#1,#2)}}
\newcommand{\Approx}{\ensuremath{\mathsf{Approx}}}
\newcommand{\sqleq}{\sqsubseteq}
\newcommand{\allowed}{\ensuremath{\mathcal{A}}}
\newcommand{\denied}{\ensuremath{\mathcal{D}}}
\newcommand{\ds}{\ensuremath{\mathsf{ds}}}
\newcommand{\crp}{\ensuremath{\mathsf{cr}}}
\newcommand{\rng}{\ensuremath{\mathrm{rng}}}
\newcommand{\MTree}{\ensuremath{\mathsf{MTree}}}
\newcommand{\eval}[2]{\ensuremath{\semantics{#1}{(#2)}}}
\newcommand{\comp}{\circ}
\newcommand{\ktrue}{\ensuremath{\mathsf{true}}\xspace}
\newcommand{\kand}{\ensuremath{\mathrel{\mathsf{and}}}\xspace}
\newcommand{\kchild}{\ensuremath{\mathsf{child}}\xspace}
\newcommand{\kattribute}{\ensuremath{\mathsf{attribute}}\xspace}
\newcommand{\kself}{\ensuremath{\mathsf{self}}\xspace}
\newcommand{\kdescendant}{\ensuremath{\mathsf{descendant}}\xspace}
\newcommand{\XPFull}{\ensuremath{\XP^{(/,//,*,[\:],=,@)}}}
\newcommand{\XPCDSF}{\ensuremath{\XP^{(/,//,*,[\:])}}}
\newcommand{\XPChild}{\ensuremath{\XP^{(/)}}}
\newcommand{\XPCDS}{\ensuremath{\XP^{(/,//,*)}}}
\newcommand{\XPCF}{\ensuremath{\XP^{(/,[\:])}}}
\newcommand{\XPCSF}{\ensuremath{\XP^{(/,*,[\:])}}}
\newcommand{\labels}{\mathsf{labels}}
\newcommand{\labelsQ}{\mathsf{labels^Q}}
\newcommand{\LPaths}{\mathsf{LP}}
\newcommand{\FPaths}{\mathsf{FP}}
\newcommand{\FPathsQ}{\mathsf{FP^Q}}
\newcommand{\AB}[1]{\langle\!\langle#1\rangle\!\rangle}
\newtheorem{proposition}{Proposition}
\newtheorem{corollary}{Corollary}
\newtheorem{theorem}{Theorem}
\newtheorem{lemma}{Lemma}
\theoremstyle{definition}
\newtheorem{example}{Example}
\newtheorem{definition}{Definition}
\theoremstyle{remark}
\begin{document}

\title{Static Enforceability of XPath-Based Access Control Policies}
\authorinfo{James Cheney}{
University of Edinburgh}{
jcheney@inf.ed.ac.uk}


\maketitle

\begin{abstract}
We consider the problem of extending XML databases with fine-grained,
high-level access control policies specified using XPath expressions.
Most prior work checks individual updates \emph{dynamically}, which is
expensive (requiring worst-case execution time proportional to the
size of the database).  On the other hand, \emph{static} enforcement
can be performed without accessing the database but may be incomplete,
in the sense that it may forbid accesses that dynamic enforcement
would allow.  We introduce topological characterizations of XPath
fragments in order to study the problem of determining when an access
control policy can be enforced statically without loss of precision.
We introduce the notion of \emph{fair} policies that are statically
enforceable, and study the complexity of determining fairness and of
static enforcement itself.




\end{abstract}

\section{Introduction}\label{sec:introduction}

Access control policies for XML documents or databases have been
studied extensively over the past 10
years~\cite{luo04cikm,cho02vldb,qi05cikm,damiani02tissec,bertino02tissec,stoica02ifip,fan04sigmod,yu04tods,murata06tissec,ayyagari07ccs,kuper09ijis}.
Most of this work focuses on high-level, declarative policies based on
XPath expressions or annotated schemas; declarative policies are
considered easier to maintain and analyze for vulnerabilities than the
obvious alternative of storing ad hoc access control annotations
directly in the database itself~\cite{fisler05icse}.  However, this
convenience comes at a cost: enforcing fine-grained, rule-based
policies can be expensive, especially for updates.  In this paper we
consider the problem of efficient enforcement of access control
policies involving update operations, where permissions are specified
using downward monotone XPath access control rules.

An example of such a policy, specifying the allowed and forbidden updates
for nurses in a hospital database, is shown in
Figure~\ref{fig:policy}.  The policy is parameterized by data values
$\$wn$ (ward number) and $\$uid$ (user id); these values are available
as part of the request so can be treated as constants.

The first three positive rules specify that nurses may insert data
into any patient records, may update information about patients in
their own ward, and may update their own phone number; the last two
negative rules specify that nurses may not insert or update treatment
elements.  Some sample data is shown in Figure~\ref{fig:example}.

Most prior work on XML access control focuses on controlling
read-access, and access control for read-only XML data is now
well-understood.  Some techniques, such as
filtering~\cite{cho02vldb,luo04cikm} and security
views~\cite{stoica02ifip,fan04sigmod,kuper09ijis}, hide sensitive data
by rewriting queries or providing sanitized views.  Other access
control techniques rely for efficiency on auxiliary data structures
(such as access control annotations~\cite{koromilas09sdm}, or
``compressed accessibility maps''~\cite{yu04tods}).  Static analysis
has been proposed to avoid dynamic checks~\cite{murata06tissec} or speed
reannotation~\cite{koromilas09sdm}.

However, access control for updates still poses challenges that previous
work on read-only access does not fully address, and XML databases
still typically lack support for fine-grained access control.  Prior
work~\cite{ayyagari07ccs,koromilas09sdm} suggests two obvious dynamic approaches to enforcement of
write-access control policies: \emph{query-based enforcement},
analogous to filtering, in which we use the policy rules and update
request to generate Boolean queries that answer ``true'' if the update
is allowed and ``false'' if not, and \emph{annotation-based
  enforcement}, in which the rules are used to place annotations on
the data indicating which updates are allowed on each node.
In annotation-based enforcement, when an update is performed the
annotations need to be updated to restore consistency with the policy;
query-based enforcement has no such maintenance overhead.

\begin{figure}[tb]
\begin{small}
 $\mbox{\it Nurse\/}(\mbox{\${\it wn\/}},\mbox{\${\it uid\/}})$:\\
  $
  \begin{array}{ll}
    R_1:&+\uinsert{//\mathtt{patient//*}}{*}\\
    R_2:&+\uupdate{//\mathtt{patient[@\mathtt{wardNo}=\$\mathit{ wn}]/*}}{*}\\
    R_3:&+\uupdate{\mathtt{//nurse[@id=\mathit{\$uid}]/phone/*}}{\ktext()}\\
    R_4:&-\uinsert{//*}{\mathtt{treatment}}\\
    R_5:&-\uupdate{\mathtt{//treatment}}{*}\\
  \end{array}
  $
 \end{small}
\caption{Policy example}
 \label{fig:policy}
\includegraphics[scale=0.25]{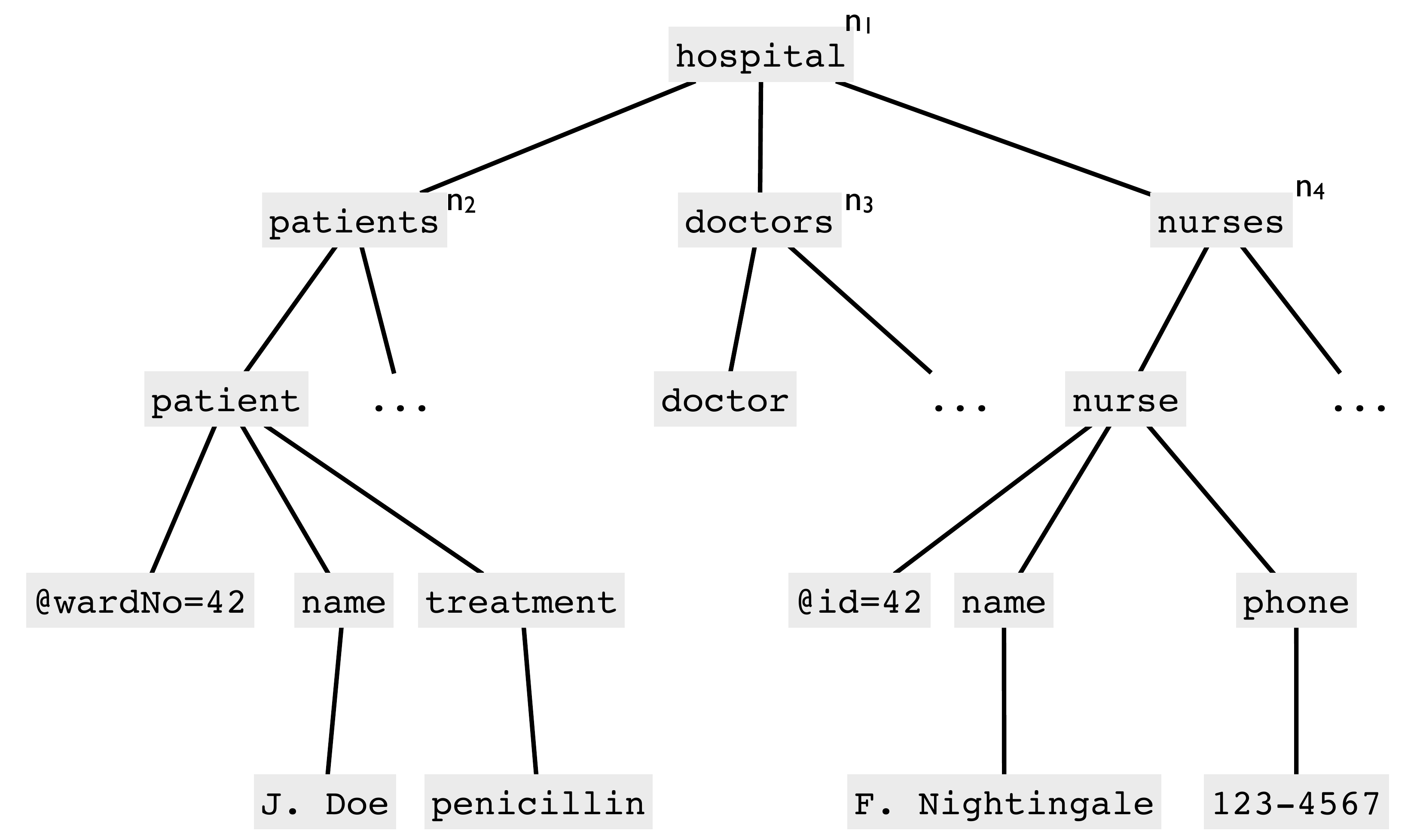}
\caption{Example data}
\label{fig:example}
\end{figure}

To illustrate, consider the data tree in Fig.~\ref{fig:example}.
Suppose nurse $n_{123}$ wishes to insert a new \verb|patient| record
represented by an XML tree $T$.  A client-side program issues an
XQuery Update expression \verb|insert T into /hospital/patients|.
Executing this update yields an atomic update $\uinsert{n_2}{T}$ where
$n_2$ is the node id of the $/\mathtt{hospital}/\mathtt{patients}$ node.  This update is
allowed dynamically by the policy, and this can be checked by
executing a query against the database to select those nodes where
patient insertion is allowed, or by maintaining annotations that
encode this information for all operations.  

Since XPath evaluation is in polynomial time (in terms of data
complexity)~\cite{GottlobKPS05}, both query-based and annotation-based
approaches are tractable in theory, but can be expensive for large
databases.  Koromilas et al.~\cite{koromilas09sdm} found that checking
whether an update is allowed is much faster using annotations than
using queries, but even with static optimizations, the overhead of
maintaining the annotations can still be prohibitively expensive for
large databases.  Both approaches can in the worst case require a complete
traversal of the database; in practice, Koromilas et
al.~\cite{koromilas09sdm} found that incremental maintenance of
annotation-based enforcement requires a few seconds per update even
for databases of modest size.

This strongly motivates an alternative approach that avoids any
dependence on the actual data: static analysis of the rules and
updates to check whether a proposed update is
allowed~\cite{murata06tissec}.  This approach draws upon exact static
analysis algorithms for intersection~\cite{hammerschmidt05ideas} and
containment~\cite{miklau04jacm} of downward XPath.  Intersection is
decidable in polynomial time, but containment for expressive fragments
of XPath can be intractable in the size of the path expressions
involved; even so, for a fixed policy such tests could still be much
faster than dynamic enforcement, because they depend only on the
policy and update size, not that of the data.

To illustrate via our running example, instead of checking the actual
atomic update against the actual data, we can consider a static
approach, under the assumption that the database does not allow atomic
updates directly but instead only accepts updates
specified using a high-level update language such as XQuery
Update~\cite{xquery-update}. For example, the user-provided update $u$
could be
\begin{verbatim}
insert T into /hospital/patients
\end{verbatim}
In prior work, we have introduced static analyses that provide a
conservative static approximation of the possible effects of an
update~\cite{benedikt09dbpl}.  We call such representations
\emph{update capabilities}.   In our
approach, the system first approximates $u$ via an update
  capability
\[U = \uinsert{/\mathtt{hospital}/\mathtt{patients}}{\mathtt{patient}}\;\]
Here, the second argument $\mathtt{patient}$ indicates the type of
node being inserted, that is, the root label of $T$. Again, in this
case the access is allowed, since $U$ is contained in the positive
rule $R_1$ and does not overlap with any of the negative rules
$R_4,R_5$.

However, purely static enforcement may not give the same results as
dynamic enforcement: put another way, for some policies and updates,
it may be impossible to statically determine whether the update is
allowed.  Static enforcement would either deny access in such a case
or fall back on dynamic techniques.  We call a policy \emph{fair} when
this is not the case: that is, when purely static and dynamic
enforcement coincide.  

For example, if we add a rule
$-\udelete{\mathtt{//patient[treatment]}}$ to the example policy in
Figure~\ref{fig:policy}, the resulting policy is unfair with respect
to any monotone fragment of XPath, because there is no way to specify
a static update request that guarantees the absence of a $\mathtt{treatment}$
child in the updated $\mathtt{patient}$ subtree.  Fair policies are of interest
because they can be enforced statically, avoiding any dependence on
the size of the data.

In this paper we consider the \emph{fairness} problem: \emph{given a
  policy language and a policy in that language, determine whether the
  policy is statically enforceable}.  We focus on subsets of downward,
unordered, monotone XPath.  In this context, \emph{downward} and
\emph{unordered} refers to the fact that we consider only the self,
child and descendant axes that navigate downward into the tree and are
insensitive to order (though our results also apply to ordered trees),
and \emph{monotone} refers to the fact that we exclude features such
as negative path tests or difference operations, so that all of the
XPath expressions we consider have monotone semantics.  We use
notation $\XP^{(S)}$, where $S$ is a set of XPath features such as
child ($/$), descendant $(//)$, filter $([~])$ or wildcard ($*$) to
denote different fragments of downward XPath.

Our key insight is based on a shift of perspective. A conventional
view of the semantics of an XPath expression $p$ over a given tree $T$
is as a set of selected nodes $n$ obtained by evaluating $p$ from the
root of $T$.  Instead, we consider the semantics of $p$ to be the set
of pairs $(T,n)$.  We consider the topological spaces generated by
different fragments of XPath.  A policy is fair (with respect to
updates specified in a given fragment $\XP$) if and only if its
semantics denotes an open set in the topology generated by $\XP$.
Intuitively, the reason for this is that a policy is fair if any
update dynamically allowed by the policy is contained in a statically
allowed update capability.  The atomic updates are points of the
topological space, the update capabilities denote basic open sets.

Based on this insight, we first prove that fairness is monotonic in the
fragment $\XP$ used for updates: that is, making the XPath
characterizations of updates more precise never damages fairness.
Second, we show that all policies over $\XPCDS$ are fair with respect
to $\XPChild$ (or any larger fragment). We show that it is \coNP-complete
to decide whether a policy over $\XPCDSF$ is fair with respect to
$\XPCF$; however, policies that only use filters in positive
rules are always fair.  We show that for update operations with a
bounded number of descendant steps, static enforcement is decidable in
polynomial time. We sketch how these results can be extended to handle
policies with attributes and data value tests.

The structure of the rest of this paper is as follows: In
Section~\ref{sec:prelim} we review the model of write-access control
policies introduced in prior work.  We define fairness and give its
topological characterization in Section~\ref{sec:fair} and present the
main results in
Section~\ref{sec:results}. Section~\ref{sec:discussion} discusses the
implications of our results and generalizations.  We conclude with
discussions of related and future work in Sections~\ref{sec:related}
and~\ref{sec:concl}.



\section{Preliminaries}
\label{sec:prelim}

\paragraph*{XML trees} We model XML documents as unordered, unranked
trees. Let $\Sigma$ be an \emph{element name alphabet}, $\Gamma$ an
\emph{attribute name alphabet}, and $D$ a \emph{data domain}.  We
assume that $\Sigma$, $\Gamma$, and $D$ are infinite and mutually
disjoint.  We consider an XML document to be a tree $T = (V_T,
E_T,R_T, \lambda_T)$, where $\lambda_T : V_T \to \Sigma\uplus (\Gamma
\times D) \uplus D$ is a function mapping each node to an appropriate
label, $E_T \subseteq V_T \times V_T$ is the edge relation, and $R_T$
is a distinguished node in $V_T$, called the root node.  We
distinguish between \emph{element nodes} labeled with $l \in \Sigma$,
\emph{attribute nodes} labeled with attribute-value pairs $(@f, d)\in
\Gamma \times D$, and \emph{data nodes} labeled with elements of $d
\in D$; attribute and data nodes must be leaves. We do not assume that
an XML DTD or schema is present.

\paragraph*{XPath}~The fragment of downward XPath used in update
operations and policies is defined as follows:
\[\begin{array}{lrcl}
\text{Paths}  & p & ::= & \alpha::\phi\mid p / p' \mid p[q]\\
 \text{Filters} & q & ::= & p \mid q \kand q \mid @f = d \mid \ktrue \\
 \text{Axes} & \alpha& ::= & \kself \mid \kchild \mid \kdescendant \mid \kattribute\\
 \text{Node tests} &\phi& ::= & l \mid  * \mid f \mid \ktext()
\end{array}\]%
Absolute paths are written $/p$; we often omit the leading slash when
this is obvious from context.  
Here, $l$ is an element label from $\Sigma$, $f$ is an attribute name
from $\Gamma$, and $d$ is a data value or parameter name.  Wildcard
$*$ matches any element or text node.
The expressions are
built using only the child, descendant and attribute axes of XPath and conditions
that test for the existence of paths or constant values of
attributes.  We use the standard abbreviated forms of XPath
expressions in examples.  For example, $/a//b[*/@d]$ abbreviates $/\locpath{\kchild}{a}/\locpath{\kdescendant}{b}[\locpath{\kchild}{*}/\locpath{\kattribute}{d}]$.  We
write $\eval{p}{T}$ for the set of nodes of a tree $T$ obtained from
evaluating XPath expression $p$ on the root node of $T$. We also write
$\semantics{\phi}$ for the subset of node labels $\Sigma \uplus (\Gamma \times D)
\uplus D$ matching $\phi$.  These semantics are defined in Figure~\ref{fig:semantics},
following standard
treatments~\cite{Benedikt03,GottlobKPS05,wadler00:_two_seman_xpath}. 

\begin{figure}[tb]
  \[
\begin{array}{rcl}
\semantics{\phi} &\subseteq & \Sigma \uplus (\Gamma \times D) \uplus D\medskip\\
\semantics{*} &=& \Sigma\\
\semantics{l} &=& \{l\}\\
\semantics{f} &=& \{(f,d) \mid d \in D\}\\
\semantics{\ktext()} &=& D\medskip\\
 A\eval{\alpha}{T}  &\subseteq& V_T \times V_T\\
 A\eval{\kself}{T} &=& \{(x,x) \mid x \in V_T\}\\
 A\eval{\kchild}{T} &=& E_T\\
 A\eval{\kdescendant}{T} &=& E_T^+\\
 A\eval{\kattribute}{T} &=& \{(m,n) \mid \lambda_T(n) = (@f,d)\}\medskip\\
 P\eval{p}{T} &\subseteq& V_T \times V_T\\
 P\eval{\alpha::\phi}{T} &=& \{(v,w) \in A\eval{\alpha}{T} \mid  \lambda_T(w) \in \semantics{\phi}\}\\
 P\eval{p[q]}{T}&=& \{(v,w) \in P\eval{p}{T} \mid w \in  Q\eval{q}{T}\}\\
P\eval{p/p'}{T} &=& \{(v,w) \mid \exists x \in V_T.  (v,x)  \in P\eval{p}{T} ,\\
&&\qquad \qquad
(x,w) \in P\eval{p'}{T}\}\smallskip\\
 Q\eval{q}{T} &\subseteq& V_T\\
 Q\eval{p}{T} &=& \{v \mid \exists w \in V_T. (v,w) \in P\eval{p}{T}\}\\
 Q\eval{q \kand q'}{T} &=& Q\eval{q}{T} \cap Q\eval{q'}{T}\\
 Q\eval{@f=d}{T} &=& \{v \mid \exists w. (v,w) \in E_T, \lambda_T(w) = (@f,d)\}\\
Q\eval{\ktrue}{T} &=& V_T
\medskip\\
 \eval{p}{T} &\subseteq& V_T\\
 \eval{p}{T} &=& \{v \mid (R_T,v) \in P\eval{p}{T}\}
\end{array}\]
  \caption{Semantics of \XPFull}
  \label{fig:semantics}
\end{figure}

 We write $\XP^{(S)}$, for
$S \subseteq \{/,*,//,[\:],=, @\}$, for the sublanguage of the
above XPath expressions that includes the features in $S$.  For
example, $\XP^{(/,//,=,@)}$ includes $/a/b[@c=\mbox{"foo"}]$, but not
$//a/*$.


We say that an XPath expression $p$ is \emph{contained in} another
expression $p'$ (written $p \sqsubseteq p'$) if for every XML tree $T$,
$\eval{p}{T} \subseteq \eval{p'}{T}$.  We say that two XPath
expressions are \emph{disjoint} if their
intersection is empty: that is, for every $T$, $\eval{p}{T} \cap
\eval{p'}{T} = \emptyset$.  Otherwise, we say $p$ and $p'$
\emph{overlap}.

As for relational queries, containment and satisfiability are closely
related for XPath queries, and both problems have been studied for
many different fragments of XPath.  Containment has been studied for
downward XPath expressions ($\XPCDSF$) by Miklau and
Suciu~\cite{miklau04jacm} and for larger fragments by
others~\cite{neven06lmcs,benedikt08jacm,tencate09jacm}.  Specifically, Miklau and
Suciu showed that containment is \coNP-complete for $\XPCDSF$ and
presented a complete, exponential algorithm and an incomplete,
polynomial time algorithm, which is complete in restricted cases.
Polynomial algorithms for testing overlap of XPath expressions in the
fragment $\XPCDSF$ have been studied in~\cite{hammerschmidt05ideas};
however, both satisfiability and containment for XPath with child
axis, filters and negation is PSPACE-hard~\cite{benedikt08jacm}, and
the complexity of containment increases to EXPTIME-hard when the
descendant axis is added.  Containment for XPath 2.0, which includes
negation, equality, quantification, intersection, and difference
operations, rapidly increases to $\EXPTIME$ or non-elementary
complexity~\cite{tencate09jacm}.

\paragraph*{Atomic Updates} We consider atomic updates of the form: 
\[u ::= \uinsert{n}{T'} \mid \uupdate{n}{T'} \mid\udelete{n}\]
where $n$ is a node expression, and $T'$ 
is an XML tree. An insert operation \uinsert{n}{T'} is
applied to a tree $T$ by adding a copy of $T'$ as a child of node $n$
(recall that we consider unordered trees so the order does not matter). The operation \udelete{n} deletes the
subtree of $n$, and likewise the operation $\uupdate{n}{T'}$ replaces
the selected node with $T'$.  We write
$\AtomicUpdates(T)$ for the set of all atomic updates applicable to
the nodes of $T$.
We omit a definition of the semantics of atomic updates on trees,
since it is not necessary for the results of the paper.


\paragraph*{Update Capabilities}~We consider update capabilities of the form
\[U ::= \uinsert{p}{\phi} \mid \uupdate{p}{\phi} \mid \udelete{p} \]
where $p$ is an XPath expression, and $\phi$ is a node test constraining
the tree that can be inserted.  Intuitively, an update capability
describes a set of atomic update operations that a user is allowed
or forbidden to perform in the context of a given policy.  An update
capability is interpreted (with respect to a given tree) as defining a
set of atomic updates:
\begin{eqnarray*}
\semantics{\uinsert{p}{\phi}}(T) &=& \{\uinsert{n}{T'} \mid n \in
\eval{p}{T}, \\&&\qquad
\lambda_{T'}(R_{T'})\in \semantics{\phi}\}\\
\semantics{\uupdate{p}{\phi}}(T) &=& \{\uupdate{n}{T'} \mid n \in
\eval{p}{T},  \\&&\qquad\lambda_{T'}(R_{T'})\in \semantics{\phi}\}\\
\semantics{\udelete{p}}(T) &=& \{\udelete{n} \mid n \in \eval{p}{T}\}
\end{eqnarray*}

%

\paragraph*{Access Control Policies}~ Following prior work
(e.g.~\cite{fundulaki07sacmat,koromilas09sdm}), we define {\em access
  control policies} $\policy=(\ds, \crp,\allowed,\denied ) $ with four
components: a {\em default semantics} $\ds \in \{+,-\}$, a {\em
  conflict resolution policy} $\crp \in \{+,-\}$, and sets $\allowed$
and $\denied$ of allowed and denied capabilities, described by XPath
expressions.  The default semantics indicates whether an operation is
allowed if no rules are applicable.  The conflict resolution policy
resolves conflicts when an operation matches both a positive rule and
a negative rule.  The semantics $\semantics{\policy}$ of a policy
$\policy = (\ds, \crp,\allowed, \denied)$ is given in
Figure~\ref{fig:policy_semantics}, defined as a function from trees
$T$ to sets of allowed atomic updates $\semantics{\policy}(T)$q.  For
example, in the deny--deny case, the accessible nodes are those for
which there is a capability granting access and no capabilities
denying access.  Note that the allow--deny and deny--allow cases are
degenerate cases of the other two when $\allowed =\emptyset$ or
$\denied=\emptyset$ respectively.

\begin{figure}[tb]
\begin{eqnarray*}
\semantics{( +, +,\allowed, \denied)}(T) & = & \AtomicUpdates(T) - (\semantics{\denied}(T) - \semantics{\allowed}(T))\\
\semantics{(-, +,\allowed, \denied )}(T) & = & \semantics{\allowed}(T)\\
\semantics{(+, -,\allowed, \denied )}(T) & = & \AtomicUpdates(T) -\semantics{\denied}(T)\\
\semantics{(-, -,\allowed, \denied )}(T) & = & \semantics{\allowed} (T)- \semantics{\denied}(T)
\end{eqnarray*}
\caption{Semantics of access control policies as the set of allowed
  atomic updates}
\label{fig:policy_semantics}
\end{figure}

\paragraph*{Enforcement Models} We now define the two enforcement
models: \emph{dynamic} and \emph{static}.  
\begin{definition}
  An update $u$ is \emph{(dynamically) allowed} on tree $T$ if
  $\semantics{u}(T) \in \semantics{\policy}(T)$.  An update capability
  $U$ is \emph{statically allowed} provided that for all $T'$, we have
  $\semantics{U}(T') \subseteq \semantics{\policy}(T')$.
\end{definition}
For any policy, if $u \in \semantics{U}(T) \subseteq
\semantics{\policy}(T)$, then clearly $u$ is dynamically allowed on
$T$.  The reverse is not necessarily the case, depending on the policy
and class $\XP$ of paths used in update capabilities.  
\begin{definition}
  A policy $\policy$ is \emph{fair} with respect to XPath fragment
  $\XP$ provided that whenever $\policy$ allows $u$ on $T$, there
  exists $U$ expressible in $\XP$ such that $u \in \semantics{U}(T)$
  and $\policy$ statically allows $U$.
\end{definition}
\begin{example}\label{ex:unfair}
  Fairness depends critically upon the class of paths that may be used
  to specify updates.  If we consider updates with respect to
  $\XPCDSF$, an example of an unfair policy is $\policy =
  (-,-,\{\udelete{/a}\},\{\udelete{/a[b]}\})$.  Static enforcement cannot
  ever allow a deletion at $/a$ because there is no way (within
  $\XPCDSF$) to specify an update that only applies to nodes
  that have no $b$ child.  Fairness could be recovered by increasing
  the expressive power of updates, for example to allow negation in
  filters; however, this makes checking containment considerably more
  difficult~\cite{neven06lmcs,benedikt08jacm,tencate09jacm}. On the other hand, constraints such as attribute
  uniqueness mean that some policies with filters in negative rules are fair:
  for example, $(-,-,\{\udelete{/a[@b=c]}\}, \{\udelete{/a[@b=d]}\})$
  is fair.
\end{example}




\section{Topological characterization of fairness}
\label{sec:fair}

For simplicity, we initially limit attention to $\XPCDSF$ and 
update capabilities and policies involving only delete capabilities, and
abuse notation by identifying $\udelete{p}$ with $p$, and thinking of
$\allowed$ and $\denied$ as sets of paths.  We adopt an alternative view of the
semantics of paths and policies.  Let $\MTree$ be the set of pairs
$(T,n)$ where $n \in V_T$.  Such pairs are called \emph{marked trees};
they are essentially \emph{tree patterns} (or \emph{twig queries}) for
XPath expressions in $\XPCF$.  We sometimes also consider doubly
marked trees, that is, structures
$(T,n,m)$ with two marked nodes.  XPath expressions and policies can be
interpreted as sets of (singly or doubly) marked trees:
\begin{definition}
  We define the marked tree semantics of absolute paths $\AB{p}$,
  paths $P\AB{p}$, and qualifiers $Q\AB{q}$ as shown in
  Figure~\ref{fig:reformulated-semantics}.  The marked tree semantics
  of  policies is defined as $\AB{\policy} =
  \{(T,n) \mid \udelete{n} \in \eval{\policy}{T}\}$.
\end{definition}
The following lemma summarizes the relationship between the original
and reformulated semantics:
\begin{lemma}
\begin{enumerate}
\item $\AB{p} = \{(T,n) \mid n \in \eval{p}{T}\}$
\item $p \sqsubseteq p'$ if and only if $\AB{p} \subseteq
\AB{p'}$
\item $p$ overlaps $p'$ if and only if $\AB{p} \cap
\AB{p'} \neq \emptyset$.   
  \end{enumerate}
\end{lemma}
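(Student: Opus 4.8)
The crux is part~1; parts~2 and~3 are immediate corollaries of it. The plan is to prove part~1 by structural induction on the path expression $p$. A direct induction on $\AB{p}$ alone is too weak, because paths and qualifiers are mutually recursive: establishing the identity for a filter $p[q]$ requires knowing that the qualifier semantics already agrees. I would therefore strengthen the statement and prove \emph{simultaneously}, by induction on the syntax, the two auxiliary identities
\[
P\AB{p} = \{(T,v,w) \mid (v,w) \in P\eval{p}{T}\}, \qquad
Q\AB{q} = \{(T,v) \mid v \in Q\eval{q}{T}\}.
\]
The guiding intuition is that the marked-tree semantics of Figure~\ref{fig:reformulated-semantics} is nothing more than the semantics of Figure~\ref{fig:semantics} with the tree $T$ pulled inside the tuple rather than carried as an external parameter. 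Once these auxiliary identities hold, part~1 follows by specializing the first one at the root, since $\AB{p} = \{(T,v) \mid (T,R_T,v) \in P\AB{p}\}$ (mirroring the root clause of Figure~\ref{fig:semantics}) and $\eval{p}{T} = \{v \mid (R_T,v) \in P\eval{p}{T}\}$.

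For the induction itself I would match each clause of the reformulated semantics against the corresponding clause in Figure~\ref{fig:semantics}. The base cases are the axis-step paths $\alpha::\phi$ and the atomic qualifiers $\ktrue$ and $@f=d$; in each, the defining set in the marked-tree semantics is obtained from the original by the same bundling of $T$, using the shared clauses for $A\eval{\alpha}{T}$ and $\semantics{\phi}$, so the identity holds with no appeal to the hypothesis. For the inductive cases $p/p'$, $p[q]$, the path-as-qualifier coercion $Q\AB{p}$, and $q \kand q'$, I would unfold both sides, apply the induction hypothesis to the immediate subexpressions, and observe that the existential quantifier over the intermediate node $x$ (for composition) and the membership test $w \in Q\eval{q}{T}$ (for filtering) line up exactly once $T$ is factored out.

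Given part~1, part~2 is pure set manipulation: $p \sqsubseteq p'$ means $\eval{p}{T} \subseteq \eval{p'}{T}$ for every $T$, which by part~1 says precisely that every pair $(T,n)$ with $n \in \eval{p}{T}$ also lies in $\{(T,n) \mid n \in \eval{p'}{T}\}$, i.e.\ $\AB{p} \subseteq \AB{p'}$; the universal quantification over $T$ in the containment definition is absorbed into the fact that $\AB{p}$ ranges over all trees simultaneously. Part~3 is analogous: $\AB{p} \cap \AB{p'} = \{(T,n) \mid n \in \eval{p}{T} \cap \eval{p'}{T}\}$ by part~1, and this set is nonempty exactly when some tree $T$ witnesses $\eval{p}{T} \cap \eval{p'}{T} \neq \emptyset$, which is the definition of overlap.

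I expect no deep obstacle. The single point requiring care is the mutual recursion between paths and qualifiers, which is why the argument cannot proceed by a naive induction on $\AB{p}$ and instead needs the simultaneous induction on the strengthened pair of identities above. Beyond that, the proof is a routine verification that the two figures define the same relation modulo the position of $T$, and the reductions of parts~2 and~3 to part~1 are elementary.
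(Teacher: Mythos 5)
Your proposal is correct, and the paper in fact states this lemma without any proof, treating it as a routine consequence of the fact that the reformulated semantics of Figure~\ref{fig:reformulated-semantics} simply bundles the tree $T$ into the tuple; your simultaneous structural induction on paths and qualifiers (with parts~2 and~3 as immediate set-theoretic corollaries of part~1) is exactly the standard argument the paper implicitly relies on. The only point worth flagging is that the clause for $P\AB{\alpha::\phi}$ in Figure~\ref{fig:reformulated-semantics} omits the node-test condition $\lambda_T(w)\in\semantics{\phi}$ as printed, which is evidently a typo that your base case silently (and correctly) repairs.
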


\begin{figure}[tb]
  \[
\begin{array}{rcl}
 \AB{p} &=& \{(T,v) \mid (T,R_T,v) \in P\AB{p}\} \smallskip\\
  P\AB{\alpha::\phi} &=& \{(T,v,w) \mid (v,w)  \in
  A\eval{\alpha}{T}\}\\
P\AB{p/p'} &=& \{(T,v,w) \mid \exists x \in V_T. (T,v,x)  \in
P\AB{p},\\
&&\qquad 
(T,x,w) \in P\AB{p'}\}\\
 P\AB{p[q]} &=& \{(T,v,w) \in P\AB{p} \mid (T,w) \in Q\AB{q}\}\smallskip\\
 Q\AB{p} &=& \{(T,v) \mid \exists w \in V_T. (T,v,w) \in P\AB{p}\}\\
 Q\AB{q \kand q'} &=& Q\AB{q} \cap Q\AB{q'}\\
 Q\AB{@f=d} &=& \{(T,v) \mid \exists w. (v,w) \in E_T, \\
&&\qquad \lambda_T(w) = (@f,d)\}
\end{array}\]
  \caption{Reformulated semantics of \XPFull}
  \label{fig:reformulated-semantics}
\end{figure}

Moreover, we define the \emph{$\XP$-underapproximation} of a set $S
\subseteq \MTree$ as $\Approx_{\XP}(S) = \bigcup\{\AB{p} \mid p \in
\XP, \AB{p} \subseteq S\}$.  Fairness can be reformulated directly in
terms of the underapproximation operation:

\begin{proposition}\label{prop:fair}
A policy $\policy$ is fair with respect to $\XP$ if and only if 
$\AB{\policy} = \Approx_{\XP}(\AB{\policy})$.
\end{proposition}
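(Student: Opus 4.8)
The plan is to prove the two inclusions separately, observing that one direction is immediate and the other is exactly a restatement of fairness. First I would record the trivial inclusion $\Approx_{\XP}(\AB{\policy}) \subseteq \AB{\policy}$: by definition $\Approx_{\XP}(\AB{\policy})$ is a union of sets $\AB{p}$ each satisfying $\AB{p} \subseteq \AB{\policy}$, so their union is contained in $\AB{\policy}$. Consequently the claimed equality $\AB{\policy} = \Approx_{\XP}(\AB{\policy})$ holds if and only if the reverse inclusion $\AB{\policy} \subseteq \Approx_{\XP}(\AB{\policy})$ holds, and it remains to show that this reverse inclusion is equivalent to fairness.

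The key step is to translate the enforcement-model conditions in the definition of fairness into statements about marked trees, using the first part of the preceding Lemma. In the delete-only setting an atomic update is $\udelete{n}$ and a capability is $\udelete{p}$, identified with $p$. By the definition of $\AB{\policy}$, the policy dynamically allows $\udelete{n}$ on $T$ exactly when $(T,n) \in \AB{\policy}$. Writing $U = \udelete{p}$, the condition $\udelete{n} \in \semantics{U}(T)$ unfolds to $n \in \eval{p}{T}$, that is, $(T,n) \in \AB{p}$. Finally, $\policy$ statically allows $U = \udelete{p}$ iff $\semantics{U}(T') \subseteq \semantics{\policy}(T')$ for every $T'$, which by the same unfolding says that every $(T',n') \in \AB{p}$ lies in $\AB{\policy}$, i.e.\ $\AB{p} \subseteq \AB{\policy}$.

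Putting these translations together, fairness asserts precisely that for every $(T,n) \in \AB{\policy}$ there is a path $p \in \XP$ with $(T,n) \in \AB{p}$ and $\AB{p} \subseteq \AB{\policy}$. But this is exactly the statement that every point of $\AB{\policy}$ belongs to some $\AB{p}$ appearing in the union defining $\Approx_{\XP}(\AB{\policy})$, i.e.\ $\AB{\policy} \subseteq \Approx_{\XP}(\AB{\policy})$. Combined with the first paragraph, this yields the desired equivalence.

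I do not expect a serious obstacle here: the result is essentially a bookkeeping exercise that unwinds the definitions of static allowance, the marked-tree semantics, and the underapproximation operator. The only point requiring care is the static-allowance translation, where one must quantify over all trees $T'$ and invoke the Lemma to rewrite the inclusion $\semantics{\udelete{p}}(T') \subseteq \semantics{\policy}(T')$ as the containment $\AB{p} \subseteq \AB{\policy}$ of marked-tree sets; getting the quantifier structure right there is what makes the two characterizations coincide.
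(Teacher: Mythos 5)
Your proposal is correct and follows essentially the same route as the paper's own proof: both rely on the trivial inclusion $\Approx_{\XP}(\AB{\policy}) \subseteq \AB{\policy}$ and then unwind the definitions of dynamic allowance, static allowance, and the marked-tree semantics to identify fairness with the reverse inclusion. Your version is slightly more explicit about translating $\semantics{\udelete{p}}(T') \subseteq \semantics{\policy}(T')$ into $\AB{p} \subseteq \AB{\policy}$, which the paper leaves implicit, but the argument is the same.
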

\begin{proof}
  For the forward direction, suppose $\policy$ is fair.  First note
  that $\AB{\policy} \supseteq \Approx_{\XP}(\AB{\policy})$ holds for
  any policy since the right-hand side is a union of sets contained in
  $\AB{\policy}$.  Suppose that $(T,n) \in \AB{\policy}$.  Then since
  $\policy$ is fair, there exists a $p$ such that $(T,n) \in \AB{p}$
  and $\AB{p} \subseteq \AB{\policy}$.  This implies that $(T,n) \in
   \Approx_{\XP}(\AB{\policy})$.

  For the reverse direction, suppose $\AB{\policy} =
  \Approx_{\XP}(\AB{\policy})$ and suppose
  $\udelete{n}$ is allowed on $T$, that is, $(T,n) \in \AB{\policy}$.
  Then there must exist some $p \in \XP$ such that $(T,n)
  \in \AB{p}$ and $\AB{p} \subseteq \AB{\policy}$.  This implies that
  $p$ is statically allowed, as required for fairness. 
\end{proof}

Fairness is obviously preserved by moving to a
larger XPath fragment $\XP'$:
\begin{corollary}\label{cor:mono}
  If $\XP \subseteq \XP'$, then if $\policy$ is fair with respect to $\XP$
  then it is also fair with respect to $\XP'$.
\end{corollary}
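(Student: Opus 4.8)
The plan is to derive the corollary directly from the topological characterization established in Proposition~\ref{prop:fair}, rather than unfolding the definition of fairness. By that proposition, fairness with respect to a fragment is equivalent to the policy's marked-tree semantics being a fixed point of the corresponding underapproximation operator. So it suffices to show that if $\AB{\policy} = \Approx_{\XP}(\AB{\policy})$, then also $\AB{\policy} = \Approx_{\XP'}(\AB{\policy})$, and then invoke Proposition~\ref{prop:fair} in the reverse direction to conclude that $\policy$ is fair with respect to $\XP'$.

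The one observation doing the work is that $\Approx$ is monotone in its fragment argument. Concretely, for any fixed $S \subseteq \MTree$, if $\XP \subseteq \XP'$ then every path $p$ satisfying $p \in \XP$ and $\AB{p} \subseteq S$ also satisfies $p \in \XP'$ and $\AB{p} \subseteq S$, so the union defining $\Approx_{\XP'}(S)$ ranges over a superset of the terms of the union defining $\Approx_{\XP}(S)$; hence $\Approx_{\XP}(S) \subseteq \Approx_{\XP'}(S)$. I would also recall the general containment already noted in the proof of Proposition~\ref{prop:fair}: for \emph{any} fragment the underapproximation is a union of sets contained in $S$, so $\Approx_{\XP'}(S) \subseteq S$ holds unconditionally.

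Combining these two facts with the hypothesis yields the chain $\AB{\policy} = \Approx_{\XP}(\AB{\policy}) \subseteq \Approx_{\XP'}(\AB{\policy}) \subseteq \AB{\policy}$, forcing equality throughout, which is exactly the fixed-point condition for $\XP'$. There is no real obstacle here: the argument is essentially immediate once the problem is phrased through $\Approx$, since enlarging the index set of a union can only enlarge it while the upper bound by $S$ is automatic. (Equivalently, one could argue purely from the definition of fairness: a witnessing capability $U$ expressible in $\XP$ is a fortiori expressible in $\XP'$, so the same witness establishes fairness with respect to the larger fragment; but the operator formulation makes the monotonicity transparent.)
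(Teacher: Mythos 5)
Your argument is correct and is exactly the route the paper intends: the corollary is stated as an immediate consequence of Proposition~\ref{prop:fair}, and your chain $\AB{\policy} = \Approx_{\XP}(\AB{\policy}) \subseteq \Approx_{\XP'}(\AB{\policy}) \subseteq \AB{\policy}$ just makes explicit the monotonicity of $\Approx$ in the fragment together with the trivial upper bound. The paper omits the details entirely, so your write-up is a faithful (and slightly more careful) version of the same proof.
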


Recall that a topological space is a structure $(X,\tau)$ where
$\tau\subseteq \powerset(X)$ is a collection of \emph{open sets} that
contains $\emptyset$ and $X$, and is closed under finite intersections
and arbitrary unions.  The complement of an open set
is called \emph{closed}.  A \emph{basis} $\mathcal{B}$ for $X$ is
collection of subsets of $X$ such that $\bigcup \mathcal{B} = X$ and
whenever $x \in B_1 \cap B_2$, there exists $B \in \mathcal{B}$ such
that $x \in B \subseteq B_1 \cap B_2$.  A basis $\mathcal{B}$ for $X$
gives rise to a topology $\tau_\mathcal{B}$ for $X$, formed by closing
$\mathcal{B}$ under arbitrary unions, which we call the topology
\emph{generated by} $\mathcal{B}$.

We consider topological spaces over the set $\MTree$ of marked trees,
and the open sets are generated by the sets $\AB{p}$ for $p$ in some
fragment $\XP$.
\begin{theorem}
  If $\{\AB{p} \mid p \in \XP\}$ is the basis for a topology $\tau$ on
  $\MTree$, then a policy $\policy$ is fair with respect to $\XP$ if
  and only if $\AB{\policy}$ is open in $\tau$.
\end{theorem}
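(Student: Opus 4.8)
The plan is to reduce the statement to Proposition~\ref{prop:fair} by recognizing that the underapproximation operator $\Approx_{\XP}$ is precisely the topological interior operator for $\tau$. Once this identification is made, the equivalence is immediate: fairness is characterized by $\AB{\policy} = \Approx_{\XP}(\AB{\policy})$, openness is characterized by a set equalling its own interior, and these two fixed-point conditions coincide.

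First I would recall the standard characterization of the topology generated by a basis: a set $O \subseteq \MTree$ is open in $\tau$ if and only if it is a union of basis elements $\AB{p}$. Using this, I would establish that for any $S \subseteq \MTree$,
\[
  \operatorname{int}_\tau(S) = \bigcup\{\AB{p} \mid p \in \XP,\ \AB{p} \subseteq S\} = \Approx_{\XP}(S).
\]
The right-to-left containment is easy: each $\AB{p} \subseteq S$ is an open set contained in $S$, so their union is an open subset of $S$ and hence lies in the interior. For the other direction, I would take any open $O \subseteq S$ and write it as a union $\bigcup_i \AB{p_i}$ of basis elements; since each $\AB{p_i} \subseteq O \subseteq S$, each occurs among the sets whose union defines $\Approx_{\XP}(S)$, so $O \subseteq \Approx_{\XP}(S)$. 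Taking the union over all open $O \subseteq S$ yields $\operatorname{int}_\tau(S) \subseteq \Approx_{\XP}(S)$.

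With this identity in hand, the theorem follows by chaining equivalences: $\policy$ is fair with respect to $\XP$ iff $\AB{\policy} = \Approx_{\XP}(\AB{\policy})$ by Proposition~\ref{prop:fair}, iff $\AB{\policy} = \operatorname{int}_\tau(\AB{\policy})$ by the identity above, iff $\AB{\policy}$ is open in $\tau$, using the standard fact that a set is open exactly when it equals its interior.

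The argument is essentially routine once the correspondence between $\Approx_{\XP}$ and interior is spotted; the only place demanding care is the hypothesis that $\{\AB{p} \mid p \in \XP\}$ genuinely \emph{is} a basis. This matters because the identity $\operatorname{int}_\tau(S) = \Approx_{\XP}(S)$ relies on open sets being exactly unions of basis elements, which is in turn guaranteed by the basis axioms (that the $\AB{p}$ cover $\MTree$ and refine at intersections). I therefore expect the main obstacle to lie not in the equivalence itself but in the assumption embedded in the hypothesis --- that the family of denotations forms a basis at all --- which is precisely the property that must be verified separately for each concrete fragment, such as $\XPChild$ or $\XPCF$.
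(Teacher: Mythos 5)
Your proof is correct and follows essentially the same route as the paper: the paper's converse direction establishes exactly your identity $\bigcup\{Y\in\tau\mid Y\subseteq\AB{\policy}\}=\Approx_{\XP}(\AB{\policy})$ by the same two containments, and its forward direction is your observation that a union of basic open sets is open. Recasting $\Approx_{\XP}$ as the interior operator is a clean way to package the argument, but the mathematical content is identical.
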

\begin{proof}
  If $\policy$ is fair, then $\AB{\policy} = \Approx_{\XP}(\AB{\policy})$.  Since each $\AB{p}$ is a (basic) open
  set, it is obvious that $\AB{\policy}$ is open.  Conversely, if $\AB{\policy}$
  is open, then $\AB{\policy} = \bigcup\{Y \in \tau \mid Y \subseteq
  \AB{\policy}\}$.  Thus, it suffices to show that $\bigcup\{Y \in \tau \mid
  Y \subseteq \AB{\policy}\} = \Approx_{\XP}(\AB{\policy})$.  The $\supseteq$ direction is immediate since every
  $\AB{p}$ is a basic open set.  For
  $\subseteq$, suppose $x \in \bigcup\{Y \in \tau \mid Y \subseteq
  \AB{\policy}\}$, that is, for some $Y\in \tau$ with $Y \subseteq \AB{\policy}$, we
  have $x \in Y$.  Any open set $Y$ is the union of basic open sets,
  so $x$ must be in some $\AB{p} \subseteq Y\subseteq \AB{\policy}$.  Hence
  $x \in\Approx_{\XP}(\AB{\policy})$. 
\end{proof}


\section{Main results}
\label{sec:results}

In this section we investigate fairness for different classes of
policies.  We first consider the simpler case of $\XPCDS$ policies and
show that they are always fair with respect to $\XPChild$.  Next, we consider fairness for
$\XPCDSF$ policies with respect to $\XPCF$ updates, and show that they
can be unfair only if they involve filters in negative rules.  We then
show that deciding fairness for such policies is \coNP-complete, and
conclude by discussing how our results extend to the general case of
$\XPFull$.

\subsection{Fairness for $\XPCDS$ policies} 

We call elements of $\XPChild$ \emph{linear paths}, and usually write
them as $\alpha,\beta$. For policies over $\XPCDS$, we consider the 
basis given by \emph{linear path sets} $\{\AB{\alpha} \mid \alpha \in \XPChild\}$.
\begin{proposition}
  The linear path sets partition $\MTree$ (and hence also form a basis for
  a topology on $\MTree$).
\end{proposition}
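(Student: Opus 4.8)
The plan is to prove the proposition directly in two halves: that the family $\{\AB{\alpha} \mid \alpha \in \XPChild\}$ consists of pairwise disjoint sets, and that their union is all of $\MTree$. The parenthetical ``basis'' clause then comes for free, since any partition of a set is a basis for its associated partition topology. The whole argument rests on a single structural fact about trees, namely that in a tree $T$ every node $n$ is reached from the root $R_T$ by a \emph{unique} path $R_T = v_0, v_1, \ldots, v_k = n$.

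First I would unfold the marked-tree semantics of a linear path. Writing such a path in abbreviated form as $\alpha = l_1/\cdots/l_k$, i.e.\ a sequence of child steps $\kchild::l_i$ with \emph{exact} element labels $l_i \in \Sigma$ (crucially, $\XPChild$ contains no wildcard, so the node tests are exact), the definitions in Figure~\ref{fig:reformulated-semantics} give $(T,n) \in \AB{\alpha}$ exactly when there is a length-$k$ child-path from $R_T$ to $n$ whose $i$-th node ($i \geq 1$) carries label $l_i$. By uniqueness of the root-to-$n$ path, this says precisely that $n$ has depth $k$ and that the label sequence along its root path is exactly $l_1, \ldots, l_k$.

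For disjointness, suppose $(T,n) \in \AB{\alpha} \cap \AB{\beta}$. By the characterization above both $\alpha$ and $\beta$ must have length equal to the depth of $n$ and must spell out the labels of the unique root-to-$n$ path, so $\alpha$ and $\beta$ agree as label sequences and denote the same set; contrapositively, distinct linear path sets are disjoint. For covering, given any $(T,n) \in \MTree$ I would simply read off the label sequence $l_1, \ldots, l_k$ of its unique root path and take $\alpha = l_1/\cdots/l_k$, whence $(T,n) \in \AB{\alpha}$. Each block is moreover nonempty, witnessed by the linear chain carrying labels $l_1, \ldots, l_k$ below an arbitrarily labelled root. The one case that must be handled separately is $n = R_T$ (depth $k = 0$): here I would cover $(T,R_T)$ by the empty path, equivalently a \kself step matching the root's label, so that the root is not left out.

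Finally, that a partition is a basis is immediate: the union of the blocks is $\MTree$ by covering, and if $x \in B_1 \cap B_2$ then disjointness forces $B_1 = B_2$, so $B = B_1$ witnesses $x \in B \subseteq B_1 \cap B_2$. I do not expect a genuine obstacle here, since the mathematical content reduces to uniqueness of paths in a tree; the only points requiring care are the boundary case of the root and the bookkeeping about node kinds. In the simplified, element-labelled setting relevant to $\XPCDSF$ every root-to-$n$ path is a sequence of element labels and the argument goes through verbatim, whereas accommodating attribute and data (leaf) nodes would require the corresponding node tests and is deferred to the treatment of $\XPFull$.
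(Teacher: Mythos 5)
Your proof is correct and is essentially the paper's argument, just carried out in full detail: the paper's own proof simply notes that every $(T,n)$ lies in the linear path set determined by the label sequence of the (unique) root-to-$n$ path, and that two linear path sets are either equal or disjoint. Your additional care about the root/empty-path case and the reduction of ``partition implies basis'' are fine elaborations of the same idea.
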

\begin{proof}
  Every point $(T,n) \in \MTree$ is in a linear path set: take $p$ to be
  the sequence of node labels along a path leading to $n$ in $T$.  Moreover, two linear path sets are
  either equal or disjoint.
\end{proof}
Consider the topology $\tau_1 = \tau_{\XPChild}$ generated by the
linear path sets.  Clearly, as for any partition topology, we have:
\begin{proposition}
  $\tau_1$ is closed under set complement.
\end{proposition}

Next, we show that any path in  $\XPCDS$ denotes an open set in
$\tau_1$, vi an auxiliary definition.
\begin{definition}
  We define the function  $\LPaths$ mapping  $p \in
  \XPCDS$ to a set of linear paths:
  \begin{eqnarray*}
       \LPaths(\kself::\phi) 
    &=& \{\kself::l \mid  l \in \semantics{\phi}\}\\
       \LPaths(\kchild::\phi) 
    &=& \{\kchild::l \mid  l \in \semantics{\phi}\}\\
    \LPaths(\kdescendant::\phi) &= &\LPaths(\kchild::*)^* \cdot \LPaths(\kchild::\phi)\\
    \LPaths(p/p') &=& \LPaths(p) \cdot \LPaths(p')
  \end{eqnarray*}
where $S \cdot T$ stands for $\{s/t \mid s \in S, t \in T\}$ and $S^*
= \bigcup_n S^n$.
\end{definition}
\begin{proposition}\label{prop:lpaths-fair}
  For every $p \in \XPCDS$, we have $P\AB{p} = \bigcup\{P\AB{\alpha}
  \mid \alpha \in \LPaths(p)\}$, and $\AB{p} = \bigcup\{\AB{\alpha}
  \mid \alpha \in \LPaths(p)\}$, hence $\AB{p}$ is open in $\tau_1$.
\end{proposition}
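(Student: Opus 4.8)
The plan is to establish the first identity $P\AB{p} = \bigcup\{P\AB{\alpha} \mid \alpha \in \LPaths(p)\}$ by structural induction on $p \in \XPCDS$, then read off the statement about $\AB{p}$ as an immediate specialization, and finally conclude openness from the fact that each $\AB{\alpha}$ is a basic open set of $\tau_1$. Throughout I will use the intended reading of $P\AB{\alpha::\phi}$ in Figure~\ref{fig:reformulated-semantics}, in which the target node $w$ additionally satisfies the node test (i.e.\ $\lambda_T(w) \in \semantics{\phi}$), as forced by the Lemma relating the two semantics.

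For the base cases, consider the single axis steps. For $\kself::\phi$ and $\kchild::\phi$ I would unfold both sides directly: $P\AB{\kchild::\phi}$ is the set of triples $(T,v,w)$ with $(v,w) \in E_T$ and $\lambda_T(w) \in \semantics{\phi}$, and the right-hand side collects exactly these triples, each witnessed by the concrete label $l = \lambda_T(w) \in \semantics{\phi}$ placing it in $P\AB{\kchild::l}$; the $\kself$ case is identical with $E_T$ replaced by the diagonal. The descendant case $\kdescendant::\phi$ is the delicate one, since $\LPaths(\kdescendant::\phi) = \LPaths(\kchild::*)^* \cdot \LPaths(\kchild::\phi)$. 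Here I would match a triple $(T,v,w)$ with $(v,w) \in E_T^+$ and $\lambda_T(w) \in \semantics{\phi}$ against a concrete descendant path $v = x_0, x_1, \dots, x_k = w$ of length $k \geq 1$ realizing the transitive edge, reading off its labels: the final step, carrying $\lambda_T(w) \in \semantics{\phi}$, lies in $\LPaths(\kchild::\phi)$, and the preceding $k-1$ (possibly zero) steps, carrying arbitrary labels, lie in $\LPaths(\kchild::*)^*$. Conversely, every such linear path denotes a triple in $E_T^+$ whose target satisfies $\phi$.

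For the composition case $p = p_1/p_2$, I would use the definition of $P\AB{p_1/p_2}$ as the relational composition of $P\AB{p_1}$ and $P\AB{p_2}$ over the intermediate node $x$, apply the induction hypothesis to each factor, and then invoke the corresponding composition law at the level of linear paths, namely $P\AB{\alpha/\beta} = \{(T,v,w) \mid \exists x.\, (T,v,x) \in P\AB{\alpha},\ (T,x,w) \in P\AB{\beta}\}$, which is definitional. This lets me rewrite the doubly-existential union over $\alpha \in \LPaths(p_1)$ and $\beta \in \LPaths(p_2)$ as $\bigcup\{P\AB{\gamma} \mid \gamma \in \LPaths(p_1) \cdot \LPaths(p_2)\}$, matching $\LPaths(p_1/p_2)$; this step is pure bookkeeping once the base cases are settled.

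Having established the $P\AB{-}$ identity, I would derive $\AB{p} = \bigcup\{\AB{\alpha} \mid \alpha \in \LPaths(p)\}$ by instantiating the first marked position to the root, using $\AB{p} = \{(T,v) \mid (T,R_T,v) \in P\AB{p}\}$ and the analogous equation for each $\alpha$. Since every $\alpha \in \LPaths(p)$ is a concatenation of $\kself$/$\kchild$ steps with concrete labels, it is a linear path in $\XPChild$, so each $\AB{\alpha}$ is a basic open set of $\tau_1$; as an arbitrary union of basic open sets is open, $\AB{p}$ is open. The one point requiring care is the descendant case: I must fix the convention that $S^0 = \{\epsilon\}$, with $\epsilon$ the identity for path concatenation, so that the length-one descendant paths (i.e.\ children) are included, and I must ensure the node test $\phi$ constrains only the final step and not the intermediate wildcard steps. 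I do not anticipate any genuine obstacle beyond this bookkeeping, and I note that the unions involved may be infinite (both because $\semantics{*} = \Sigma$ is infinite and because the Kleene star yields arbitrarily long paths), which is harmless since topologies are closed under arbitrary unions.
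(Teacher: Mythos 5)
Your proposal is correct and follows essentially the same route as the paper's proof: structural induction establishing the $P\AB{\cdot}$ identity (unfolding the single-step cases label by label, decomposing the descendant step into a chain of labelled child edges, and pushing the union through relational composition for $p/p'$), then specializing the first marked position to the root and concluding openness as a union of basic opens. Your side remarks --- that the node test must constrain the target of an axis step, and that $S^0$ must contain the empty path --- are sound bookkeeping observations that the paper handles implicitly.
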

\begin{proof}
  The first part follows by induction on the structure of $p$.  The
  base cases for $\kchild::\phi$ and $\kself::\phi$ 
  are
  straightforward.  For a path $\kdescendant::\phi$, we reason as
  follows:
  \begin{eqnarray*}
   && P\AB{\kdescendant::\phi}\\
 &=& 
    \bigcup\{(T,n,m) \mid (n,m) \in E^+_T, \lambda_T(m) \in \semantics{\phi}\}\\
 &=& 
    \bigcup\{(T,n,m) \mid (n,m) \in E^+_T, \lambda_T(m) = l,l
    \in \semantics{\phi}\}\\
 &=& 
    \bigcup\{(T,n,m) \mid (n,n_1) \in E_T, \lambda_T(n_1) =
    \alpha_1,\ldots,\\
&&\qquad(n_k,m) \in E_T,\lambda_T(n_k) = \alpha_k,  \lambda_T(m) = l,l
    \in \semantics{\phi}\}\\
&=& 
\bigcup\{P\AB{\alpha/l} \mid \alpha \in \Sigma^*, l \in \semantics{\phi}\}\\
&=& \bigcup\{P\AB{\alpha_0} \mid \alpha_0 \in \LPaths(\kdescendant::\phi)\}
  \end{eqnarray*}
  For the fourth equation, observe that for any marked tree $(T,n)$
  there is a (possibly empty) path $\alpha$ formed of labels of nodes leading from the
  root of $T$ to $n$.  Conversely, for any $\alpha$
  there is a (linear) tree $T$ and node $n$ such that $\alpha$ is the
  list of labels of nodes from the root to $n$.

If the path is of the form $p/p'$, then we reason as follows:
\begin{eqnarray*}
&& P\AB{p/p'}\\
 &=& \{(T,n,m) \mid \exists k \in V_T. (T,n,k) \in P\AB{p},
(T,k,m) \in P\AB{p'}\}  \\
&=& \{(T,n,m) \mid \exists k \in V_T. (T,n,k) \in \bigcup\{P\AB{\alpha}
  \mid \alpha \in \LPaths(p)\},\\
&&\qquad
(T,k,m) \in \bigcup\{P\AB{\beta} \mid \beta \in \LPaths(p')\}\}  \\
 &=& \{(T,n,m) \mid \exists k \in V_T. (T,n,k) \in P\AB{\alpha}, \alpha
 \in \LPaths(p),\\
&&\qquad 
(T,k,m) \in P\AB{\beta}, \beta \in \LPaths(p')\}  \\
 &=& \{(T,n,m) \mid \exists k \in V_T. (T,n,k) \in P\AB{\alpha}, (T,k,m) \in P\AB{\beta},\\
&&\qquad 
\alpha
 \in \LPaths(p), \beta \in \LPaths(p')\}  \\
 &=& \bigcup \{P\AB{\alpha/\beta} \mid \alpha
 \in \LPaths(p), \beta \in \LPaths(p')\}  \\
 &=& \bigcup \{P\AB{\alpha_0} \mid \alpha_0
 \in \LPaths(p/p')\}  
\end{eqnarray*}

The second part is
  immediate since 
  \begin{eqnarray*}
    \AB{p} &=& \{(T,n) \mid (T,R_T,n) \in P\AB{p}\}\\
&=& \{(T,n) \mid (T,R_T,n) \in \bigcup\{P\AB{\alpha} \mid \alpha \in
\LPaths(p)\}\}\\
&=& \{(T,n) \mid (T,R_T,n) \in P\AB{\alpha}, \alpha \in
\LPaths(p)\}\\
&=& \bigcup \{\AB{\alpha} \mid \alpha \in \LPaths(p)\}
  \end{eqnarray*}
  which is a union of open sets in $\tau_1$.
\end{proof}

\begin{proposition}
  Every $\XPCDS$-policy $\policy$ denotes an open set in $\tau_1$.
\end{proposition}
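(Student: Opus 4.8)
The plan is to reduce the statement to two facts already in hand: every $\AB{p}$ with $p \in \XPCDS$ is open in $\tau_1$ (Proposition~\ref{prop:lpaths-fair}), and $\tau_1$, being a partition topology, is closed under set complement. Taken together these make the open sets of $\tau_1$ a \emph{Boolean algebra} of subsets of $\MTree$ — closed under arbitrary union, finite intersection, \emph{and} complement — and I will exhibit $\AB{\policy}$ in each of the four policy shapes as a Boolean combination of the two open sets $\AB{\allowed}$ and $\AB{\denied}$.

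First I would record that $\AB{\allowed} = \bigcup_{p \in \allowed} \AB{p}$ and $\AB{\denied} = \bigcup_{p \in \denied} \AB{p}$ are open, since each $\AB{p}$ is open by Proposition~\ref{prop:lpaths-fair} and open sets are closed under union. Next I would translate the marked-tree semantics of the four policy cases from Figure~\ref{fig:policy_semantics}, using that in the delete-only setting the marked-tree reading of $\AtomicUpdates(T)$ is all of $\MTree$. This gives
\[
\AB{(-,+,\allowed,\denied)} = \AB{\allowed}, \qquad
\AB{(+,-,\allowed,\denied)} = \MTree \setminus \AB{\denied},
\]
\[
\AB{(-,-,\allowed,\denied)} = \AB{\allowed} \setminus \AB{\denied}, \qquad
\AB{(+,+,\allowed,\denied)} = \MTree \setminus (\AB{\denied} \setminus \AB{\allowed}).
\]

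Then I would verify openness case by case. The $(-,+)$ case is immediate. For $(+,-)$, the set $\MTree \setminus \AB{\denied}$ is open because $\AB{\denied}$ is open and $\tau_1$ is complement-closed. The $(-,-)$ case is the finite intersection $\AB{\allowed} \cap (\MTree \setminus \AB{\denied})$ of two open sets, hence open. Finally the $(+,+)$ case rewrites by De Morgan as $(\MTree \setminus \AB{\denied}) \cup \AB{\allowed}$, a union of two open sets. In every case $\AB{\policy}$ is open, as required.

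The step that carries the real weight — and the only place the argument could break — is the appeal to complement-closure of $\tau_1$. A generic topology is not closed under complement, so the $(+,-)$ and $(+,+)$ cases, which subtract the denied set, genuinely depend on $\tau_1$ being a \emph{partition} topology; this is exactly the earlier proposition that $\tau_1$ is closed under set complement, which itself rests on the linear path sets partitioning $\MTree$. Everything else is routine Boolean bookkeeping layered on top of Proposition~\ref{prop:lpaths-fair}.
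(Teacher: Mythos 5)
Your proof is correct and follows essentially the same route as the paper's: both rest on Proposition~\ref{prop:lpaths-fair} to get openness of $\AB{\allowed}$ and $\AB{\denied}$ as unions of open sets, and on the complement-closure of the partition topology $\tau_1$ to handle the set subtractions. You merely spell out the four policy cases explicitly where the paper compresses them into one line.
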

\begin{proof}
  Clearly, the sets $\AB{\allowed},\AB{\denied}$ are open since they are unions of
  open sets.  Since $\tau_1$ is closed under complement, the set
  $\AB{\denied}$ is closed so $\AB{\policy}$ is
  open.
\end{proof}

\begin{corollary}
  Every $\XPCDS$-policy is fair with respect to $\XPChild$.
\end{corollary}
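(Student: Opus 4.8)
The plan is to read this corollary off directly from the two immediately preceding results, so I expect essentially no difficulty. The topological characterization Theorem states that for any fragment $\XP$ whose marked-tree semantics $\{\AB{p} \mid p \in \XP\}$ forms a basis, a policy is fair with respect to $\XP$ precisely when $\AB{\policy}$ is open in the generated topology. The Proposition just above already establishes that every $\XPCDS$-policy $\policy$ denotes an open set in $\tau_1$. So all that remains is to instantiate the Theorem at $\XP = \XPChild$ and $\tau = \tau_1$ and chain these two facts together.

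The one point I would verify before invoking the Theorem is that its hypothesis genuinely holds for $\XPChild$, namely that $\{\AB{\alpha} \mid \alpha \in \XPChild\}$ is a basis. This is supplied by the earlier Proposition showing that the linear path sets partition $\MTree$ (a partition is in particular a basis, as that Proposition already notes), and $\tau_1 = \tau_{\XPChild}$ is by definition the topology they generate. With this confirmed, I would take an arbitrary $\XPCDS$-policy $\policy$, apply the preceding Proposition to conclude that $\AB{\policy}$ is open in $\tau_1$, and then apply the relevant direction of the Theorem to conclude that $\policy$ is fair with respect to $\XPChild$.

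The main—and essentially only—obstacle is bookkeeping: ensuring that the fragment used to build the topology ($\XPChild$) matches the fragment named in the fairness claim, and that the openness result is stated in the same topology $\tau_1$. Since the openness Proposition, the partition/basis Proposition, and the characterization Theorem are all phrased in terms of $\tau_1 = \tau_{\XPChild}$, these align automatically. In particular, no new computation is required here: neither a further appeal to $\LPaths$ nor to the closure of $\tau_1$ under complement is needed, as that work was already carried out in proving that $\AB{\policy}$ is open.
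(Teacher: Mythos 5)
Your proposal is correct and matches the paper's own (implicit) argument exactly: the corollary is obtained by combining the proposition that linear path sets form a basis, the proposition that every $\XPCDS$-policy denotes an open set in $\tau_1$, and the topological characterization theorem instantiated at $\XP = \XPChild$. No further work is needed, just as you observe.
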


\subsection{Fairness for $\XPCDSF$ policies} 

Linear path sets are not rich enough to make all expressions in
$\XPCDSF$ denote open sets. For example, $\AB{a[b]}$ is not
open in $\tau_1$; if it were, then it would be expressible as a
(possibly infinite) union of basic open sets $\AB{\alpha}$.  However,
clearly the only $\alpha$ such that $\AB{\alpha}$ overlaps with
$\AB{a[b]}$ is $/a$, and $\AB{a[b]} \subsetneq \AB{a}$.  Thus,
updates based on linear paths are not sufficiently expressive for policies
involving filters.

Instead, we generalize to \emph{filter paths} $\XPCF$.  These paths
correspond in a natural way to marked trees $(T,n)$.  We adopt a
standard definition of  a 
\emph{tree homomorphism} $h : T \to U$ as a function mapping $V_T$
to $V_U$ such that 
\begin{enumerate}
\item 
$R_U = h(R_T)$,
 and 
\item for each $(v,w) \in E_T$ we have $(h(v),h(w)) \in E_U$, and
\item for each $v \in V_T$ we have $\lambda_T(v) = \lambda_U(h(v))$.  
\end{enumerate}
A marked tree
$(T,n)$ matches a tree $U$ at node $m$ (i.e., matches the marked tree
$(U,m)$) if there is a tree homomorphism $h :T \to U$ such that $h(n)
= m$.  We refer to such a homomorphism as a \emph{marked tree homomorphism}
$h : (T,n) \to (U,m)$, and write $\AB{T,n}$ for the set of all
homomorphic images of $(T,n)$.  If $p \in \XPCF$ corresponds to marked
tree $(T,n)$ then it is easy to show that $\AB{p} = \AB{T,n}$.

\begin{lemma}\label{lem:intersection}
  If $\AB{T,n}$ and $\AB{U,m}$ overlap, then there is a marked tree 
  $(V,k)$ such that $\AB{V,k} = \AB{T,n} \cap \AB{U,m}$.
\end{lemma}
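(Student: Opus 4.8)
The plan is to realize $\AB{T,n} \cap \AB{U,m}$ as the homomorphic-image set of a single marked tree obtained by \emph{gluing} $(T,n)$ and $(U,m)$ along the unique path from the root to the marked node. First I would record the standard homomorphism characterization of containment: $\AB{T,n} \subseteq \AB{U,m}$ holds if and only if there is a marked tree homomorphism $(U,m) \to (T,n)$ (the ``only if'' direction applies the containment to $(T,n) \in \AB{T,n}$ via the identity, and the ``if'' direction composes homomorphisms). In particular, if both $(T,n)$ and $(U,m)$ admit homomorphisms \emph{into} a marked tree $(V,k)$, then $\AB{V,k} \subseteq \AB{T,n} \cap \AB{U,m}$, so the entire problem reduces to constructing a $(V,k)$ into which both patterns map and that is, moreover, \emph{universal} among marked trees into which both $(T,n)$ and $(U,m)$ map.

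The key structural observation concerns the \emph{spine}, i.e.\ the sequence of nodes on the unique path from the root to the marked node. Since $\AB{T,n}$ and $\AB{U,m}$ overlap, there is a marked tree $(W,\ell)$ with homomorphisms $f : (T,n) \to (W,\ell)$ and $g : (U,m) \to (W,\ell)$. Because every homomorphism sends roots to roots and edges to edges, $f$ carries the spine of $T$ onto the unique root-to-$\ell$ path of $W$, sending the depth-$i$ spine node of $T$ to the depth-$i$ node of that path, and likewise for $g$ and $U$. Hence the two spines have equal length and identical label sequences, both equal to the labels along the root-to-$\ell$ path. I would prove this first, as it is exactly where the overlap hypothesis is used: without a common image the two root-to-mark paths need not be compatible and the intersection can be empty, which is never of the form $\AB{V,k}$ since $(V,k) \in \AB{V,k}$.

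Given matching spines, I construct $(V,k)$ as the pushout that identifies the two spines node by node and attaches the off-spine subtrees of $T$ and of $U$ as disjoint branches hanging from the corresponding spine nodes, taking $k$ to be the shared deepest spine node. Because only the linearly ordered spines are identified, the result is again a tree (any redundant off-spine duplication is harmless, being absorbed by homomorphic folding). The inclusion $\AB{V,k} \subseteq \AB{T,n} \cap \AB{U,m}$ is then immediate from the two evident embeddings $(T,n) \to (V,k)$ and $(U,m) \to (V,k)$ together with the first paragraph.

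For the reverse inclusion I verify the universal property. Given any $(W',\ell') \in \AB{T,n} \cap \AB{U,m}$, with witnessing homomorphisms $f' : (T,n) \to (W',\ell')$ and $g' : (U,m) \to (W',\ell')$, the spine observation forces $f'$ and $g'$ to agree on the identified spine, since each sends the depth-$i$ spine node to the depth-$i$ node of the unique root-to-$\ell'$ path. Therefore $f'$ and $g'$ amalgamate into one well-defined homomorphism $(V,k) \to (W',\ell')$ — using $f'$ on the $T$-branches, $g'$ on the $U$-branches, and their common value on the spine — witnessing $(W',\ell') \in \AB{V,k}$. The main obstacle, and the heart of the argument, is the spine-agreement step: establishing that compatibility of the two root-to-mark paths is both \emph{forced} by any common image and \emph{sufficient} to make the glued structure a tree with the required universal property. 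Once that is in place, both inclusions and the well-definedness of the amalgamated map are routine.
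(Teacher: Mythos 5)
Your proposal is correct and follows essentially the same route as the paper: both arguments hinge on the observation that any two homomorphisms into a common image must agree on the root-to-mark spine (forcing the spines of $(T,n)$ and $(U,m)$ to have equal length and labels), both take $(V,k)$ to be the amalgam of the two trees glued along that spine with off-spine subtrees kept disjoint, and both prove the reverse inclusion by piecing together the two witnessing homomorphisms into a single map out of $(V,k)$. The only cosmetic difference is that the paper obtains this amalgam by normalizing and pruning an arbitrary witness of the overlap (assuming the witnessing homomorphisms injective with ranges meeting only on the spine), whereas you construct it directly as a pushout; the resulting object and universality argument are the same.
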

This proof is technical, but straightforward; the details are in an appendix.
\begin{corollary}
    The sets $\{\AB{T,n} \mid (T,n) \in \MTree\}$ form a basis for a topology on $\MTree$.
\end{corollary}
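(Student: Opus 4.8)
The plan is to verify the two defining conditions of a topological basis for the family $\{\AB{T,n} \mid (T,n) \in \MTree\}$, namely (i) that the basic sets cover $\MTree$, and (ii) the intersection condition: whenever a point lies in the intersection of two basic sets, some basic set sits between that point and the intersection. The covering condition is immediate: for any marked tree $(T,n)$, the identity homomorphism witnesses $(T,n) \in \AB{T,n}$, so every point of $\MTree$ belongs to at least one basic set, and hence $\bigcup\{\AB{T,n}\} = \MTree$.

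For the intersection condition, suppose $x = (V,k) \in \AB{T,n} \cap \AB{U,m}$. First I would observe that the two basic sets overlap (they share the point $x$), so Lemma~\ref{lem:intersection} applies and yields a single marked tree $(W,j)$ with $\AB{W,j} = \AB{T,n} \cap \AB{U,m}$. This is the key step that does all the work. Then I simply take the witnessing basic set to be $\AB{W,j}$ itself: we have $x \in \AB{W,j}$ because $x$ is in the intersection, and trivially $\AB{W,j} \subseteq \AB{T,n} \cap \AB{U,m}$ since the two are in fact equal. This exhibits the required basic set $B = \AB{W,j}$ with $x \in B \subseteq \AB{T,n} \cap \AB{U,m}$, completing the verification.

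I do not expect any genuine obstacle here, since the substantive content, that the intersection of two homomorphic-image sets is again a single homomorphic-image set, is exactly what Lemma~\ref{lem:intersection} provides. The only point requiring a moment's care is the logical move from ``$x$ lies in both sets'' to ``the two sets overlap,'' which is needed to invoke the lemma; this is immediate since a shared point certifies nonempty intersection. It is worth emphasizing that the basis condition is actually stronger than necessary in this instance: because the intersection of two basic sets is itself a basic set (rather than merely a union of them), the family is closed under finite intersection, so one could alternatively note that the generated topology has these sets as a base in a particularly clean way. But for the corollary as stated, the straightforward appeal to Lemma~\ref{lem:intersection} after checking the covering property is all that is required.
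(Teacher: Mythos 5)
Your proof is correct and follows exactly the route the paper intends: the corollary is stated as an immediate consequence of Lemma~\ref{lem:intersection}, with the covering condition witnessed by the identity homomorphism and the intersection condition supplied by the lemma. Nothing further is needed.
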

Let $\tau_2$ be the topology generated by the
sets $\AB{T,n}$.
\begin{definition}
  The set $\FPaths(p)$ of filter paths of $p \in \XPCDSF$ is defined as
\begin{eqnarray*}
  \FPaths(ax::\phi) &=& \LPaths(ax::\phi)\\
\FPaths(p / p') &=& \FPaths(p) \cdot \FPaths(p')\\
\FPaths(p [q]) &=& \{p'[q'] \mid p' \in \FPaths(p),q'
\in \FPathsQ(q)\}\\
\FPathsQ(p) &=& \FPaths(p)\\
\FPathsQ(q_1 \kand q_2) &=& \{q_1' \kand q_2' \mid q_1' \in
\FPathsQ(q), q_2' \in  \FPathsQ(q')\}\\
\FPathsQ(\ktrue) &=& \{\ktrue\}
\end{eqnarray*}
\end{definition}
\begin{proposition}
For every $p \in \XPCDSF$, we have $P\AB{p} = \bigcup\{P\AB{p'} \mid
  p' \in \FPaths(p)\}$, and $Q\AB{q} = \bigcup\{Q\AB{q'} \mid q'
  \in \FPathsQ(q)\}$, hence $\AB{p}$ is open in $\tau_2$.
\end{proposition}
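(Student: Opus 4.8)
The plan is to prove the statement by structural induction on the path expression $p \in \XPCDSF$, mirroring the structure of the $\FPaths$ definition and reusing the already-established result for $\XPCDS$ (Proposition~\ref{prop:lpaths-fair}) as the base case. The two displayed equalities for $P\AB{p}$ and $Q\AB{q}$ must be proved simultaneously by mutual induction, since paths can contain filters and filters contain paths. The final claim that $\AB{p}$ is open in $\tau_2$ will then follow immediately: once $P\AB{p}$ is expressed as a union $\bigcup\{P\AB{p'} \mid p' \in \FPaths(p)\}$, each $p'$ is a filter path corresponding to a marked tree $(T,n)$, so $\AB{p'} = \AB{T,n}$ is a basic open set of $\tau_2$, and $\AB{p}$ is a union of such sets.

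First I would handle the base case $p = ax::\phi$. Here $\FPaths(ax::\phi) = \LPaths(ax::\phi)$, so the required equation $P\AB{ax::\phi} = \bigcup\{P\AB{\alpha} \mid \alpha \in \LPaths(ax::\phi)\}$ is exactly what Proposition~\ref{prop:lpaths-fair} already gives us. Next, the composition case $p/p'$ is a routine unfolding: I would expand $P\AB{p/p'}$ using its definition, substitute the inductive hypotheses for $P\AB{p}$ and $P\AB{p'}$, and then commute the existential quantifier past the two unions to rewrite the result as $\bigcup\{P\AB{\alpha/\beta} \mid \alpha \in \FPaths(p), \beta \in \FPaths(p')\}$; this matches $\FPaths(p/p') = \FPaths(p)\cdot\FPaths(p')$. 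This step is essentially identical to the $p/p'$ calculation already carried out in the proof of Proposition~\ref{prop:lpaths-fair}, so I would follow that template.

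The genuinely new case is the filter $p[q]$, and I expect the interaction between the path component and the qualifier component to be where the real care is needed. Here $P\AB{p[q]} = \{(T,v,w) \in P\AB{p} \mid (T,w) \in Q\AB{q}\}$. Using the inductive hypothesis $P\AB{p} = \bigcup\{P\AB{p'} \mid p' \in \FPaths(p)\}$ for the path part and the (mutual) inductive hypothesis $Q\AB{q} = \bigcup\{Q\AB{q'} \mid q' \in \FPathsQ(q)\}$ for the qualifier part, I would argue that a triple $(T,v,w)$ lies in $P\AB{p[q]}$ exactly when there exist $p' \in \FPaths(p)$ with $(T,v,w) \in P\AB{p'}$ and $q' \in \FPathsQ(q)$ with $(T,w) \in Q\AB{q'}$, which is precisely membership in $P\AB{p'[q']}$. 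Distributing the filter constraint over the union then yields $P\AB{p[q]} = \bigcup\{P\AB{p'[q']} \mid p' \in \FPaths(p), q' \in \FPathsQ(q)\}$, matching the definition of $\FPaths(p[q])$. The qualifier equations for $Q\AB{p}$, $Q\AB{q_1 \kand q_2}$, and $Q\AB{\ktrue}$ are handled analogously: the first reduces to the path case by projecting away the second marked node, the conjunction commutes with intersection of the two unions, and the $\ktrue$ case is trivial since $\FPathsQ(\ktrue) = \{\ktrue\}$.

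The main obstacle, such as it is, is bookkeeping rather than conceptual: I must ensure the mutual induction is set up so that the $p[q]$ case may invoke the $\FPathsQ$ equation on the strictly smaller qualifier $q$, and that distributing the filtering constraint $(T,w) \in Q\AB{q'}$ across a union indexed by $\FPaths(p)$ does not silently assume the two index sets interact — they do not, since $p'$ constrains the triple's path and $q'$ independently constrains the endpoint $w$, so the combined family is indexed by the product $\FPaths(p) \times \FPathsQ(q)$. Once the equality $P\AB{p} = \bigcup\{P\AB{p'} \mid p' \in \FPaths(p)\}$ is established, openness in $\tau_2$ is immediate from the same $\AB{p} = \{(T,n) \mid (T,R_T,n) \in P\AB{p}\}$ computation used at the end of Proposition~\ref{prop:lpaths-fair}, now observing that each $\AB{p'}$ with $p' \in \FPaths(p)$ is a basic open set $\AB{T,n}$ of $\tau_2$.
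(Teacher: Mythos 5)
Your proposal is correct and follows essentially the same route as the paper's own proof: a mutual structural induction on paths and qualifiers, with the base cases delegated to Proposition~\ref{prop:lpaths-fair}, the $p/p'$ case reusing that template, the $p[q]$ case handled by distributing the filter constraint over the product $\FPaths(p) \times \FPathsQ(q)$, and openness following because each filter path denotes a basic open set $\AB{T,n}$ of $\tau_2$. No gaps to report.
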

\begin{proof}
  We show by induction that for every $p \in \XPCDSF$, we have
  $P\AB{p} = \bigcup\{\AB{p'} \mid p' \in \FPaths(p)\}$. 
  The base cases are as in Prop.~\ref{prop:lpaths-fair}.
  The inductive step case for $p/p'$ is straightforward, following the
  same idea as in Prop.~\ref{prop:lpaths-fair}.  We give the inductive case for
  $p[q]$ as follows.
  \begin{eqnarray*}
  &&  P\AB{p[q]} \\
&=& \{(T,n,m) \mid (T,n,m) \in P\AB{p}, (T,m) \in    Q\AB{q}\}\\
&=& \{(T,n,m) \mid (T,n,m) \in \bigcup\{P\AB{p'} \mid p' \in
\FPaths(p)\},\\
&&\quad 
 (T,m) \in \bigcup\{Q\AB{q'} \mid q' \in \FPathsQ(q)\}\}\\
&=& \{(T,n,m) \mid (T,n,m) \in P\AB{p'} , p' \in
\FPaths(p), \\
&&\quad  (T,m) \in Q\AB{q'}, q' \in \FPathsQ(q)\}\\
&=& \{(T,n,m) \mid (T,n,m) \in P\AB{p'}, (T,m) \in Q\AB{q'}, \\
&&\quad  p' \in
\FPaths(p), q' \in \FPathsQ(q)\}\\
 &=& \bigcup\{\{(T,n,m) \mid (T,n,m) \in P\AB{p'}, (T,m) \in
 Q\AB{q'}\} \mid  \\
&&\quad  p' \in
\FPaths(p), q'\in \FPathsQ(q)\} \\
\end{eqnarray*}
\begin{eqnarray*}
&=& \bigcup\{P\AB{p'[q']} \mid  p' \in
\FPaths(p), q'\in \FPathsQ(q)\} \\
&=& \bigcup\{P\AB{p_0} \mid  p_0 \in
\FPaths(p[q])\} 
\end{eqnarray*}

For filters, the base case for $\ktrue$ is trivial.  
Suppose $q$ is a path existence test
$p$.  Then 
\begin{eqnarray*}
  Q\AB{p} &=& \{(T,n) \mid \exists m. (T,m,n) \in P\AB{p}\}\\
&=& \{(T,n) \mid \exists m. (T,m,n) \in \bigcup\{P\AB{p'} \mid p' \in \FPaths(p)\}\}\\
&=& \{(T,n) \mid \exists m. (T,m,n) \in P\AB{p'} , p' \in \FPaths(p)\}\\
&=& \bigcup\{\{(T,n) \mid \exists m. (T,m,n) \in P\AB{p'}\} \mid p' \in \FPaths(p)\}\\
&=& \bigcup\{P\AB{p'} \mid p' \in \FPaths(p)\}\\
&=& \bigcup\{Q\AB{q'} \mid q' \in \FPathsQ(q)\}
\end{eqnarray*}

Finally, if $q$ is a conjunction $q_1 \kand q_2$, then we reason as
follows:
\begin{eqnarray*}
&&  Q\AB{q_1 \kand q_2}\\
 &=& Q\AB{q_1} \cap Q\AB{q_2}\\
 &=& \bigcup\{Q\AB{q_1'} \mid q_1' \in \FPathsQ(q_1)\} \cap
 \bigcup\{Q\AB{q_2'} \mid q_2' \in \FPathsQ(q_2)\}\\
 &=& \bigcup\{Q\AB{q_1'} \cap Q\AB{q_2'} \mid q_1' \in
 \FPathsQ(q_1),q_2' \in \FPathsQ(q_2)\}\\
 &=& \bigcup\{Q\AB{q_1'\kand q_2'} \mid q_1' \in \FPathsQ(q_1),q_2' \in \FPathsQ(q_2)\}\\
 &=& \bigcup\{Q\AB{q'} \mid q' \in \FPathsQ(q_1 \kand q_2)\}
\end{eqnarray*}

The argument that $\AB{p}$ is open is similar to that for Prop.~\ref{prop:lpaths-fair}.
\end{proof}
Note that $\tau_2$ is \emph{not} closed under complement; for example,
the complement of $\AB{a[b]}$ is not open.  This is, intuitively, why
unfair policies (such as Example~\ref{ex:unfair}) exist for
$\XPCDSF$.  However, we do have:
\begin{theorem}
  If $\policy = (\ds,\crp,\allowed,\denied)$ where $\allowed \subseteq \XPCDSF$
  and $\denied \subseteq \XPCDS$ then $\policy$ is fair with respect
  to $\XPCF$.
\end{theorem}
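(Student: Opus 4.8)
The plan is to invoke the topological characterization of fairness: since the marked-tree sets $\{\AB{T,n}\}$ form a basis for $\tau_2$ and coincide with $\{\AB{p} \mid p \in \XPCF\}$, it suffices to prove that $\AB{\policy}$ is open in $\tau_2$. I would begin by unfolding the marked-tree semantics of $\AB{\policy}$ according to the four $(\ds,\crp)$ cases of Figure~\ref{fig:policy_semantics}, which in each case expresses $\AB{\policy}$ as a finite Boolean combination of $\AB{\allowed}$ and the complement $\MTree \setminus \AB{\denied}$ using only unions and intersections (for instance, $(-,-)$ gives $\AB{\allowed} \cap (\MTree \setminus \AB{\denied})$ and $(+,+)$ gives $\AB{\allowed} \cup (\MTree \setminus \AB{\denied})$). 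Since open sets are closed under finite intersection and arbitrary union, the goal then reduces to showing that $\AB{\allowed}$ and $\MTree \setminus \AB{\denied}$ are both open in $\tau_2$.

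The first of these is immediate: because $\allowed \subseteq \XPCDSF$, the preceding proposition gives that each $\AB{p}$ with $p \in \allowed$ is open in $\tau_2$, so $\AB{\allowed} = \bigcup_{p \in \allowed} \AB{p}$ is open. The second --- openness of $\MTree \setminus \AB{\denied}$, equivalently closedness of $\AB{\denied}$ --- is the only place the hypothesis $\denied \subseteq \XPCDS$ is used, and I would establish it by detouring through the coarser topology $\tau_1$. Since each $p \in \denied$ lies in $\XPCDS$, Proposition~\ref{prop:lpaths-fair} gives that $\AB{p}$ is open in $\tau_1$, so $\AB{\denied}$ is $\tau_1$-open; because $\tau_1$ is closed under complement, $\MTree \setminus \AB{\denied}$ is also $\tau_1$-open. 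Finally I would note that $\tau_1 \subseteq \tau_2$, since every linear path set $\AB{\alpha}$ with $\alpha \in \XPChild \subseteq \XPCF$ is already a basic open set of $\tau_2$; hence $\MTree \setminus \AB{\denied}$ is $\tau_2$-open as well, completing all four cases.

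The main obstacle is precisely this transfer of closedness into $\tau_2$. One cannot argue directly in $\tau_2$, because $\tau_2$ is not closed under complement (e.g.\ $\MTree \setminus \AB{a[b]}$ is not $\tau_2$-open), so complements of denied sets need not be open there in general. The argument instead exploits the fact that $\XPCDS$-paths denote \emph{clopen} sets in the complement-closed topology $\tau_1$, and that $\tau_1$-openness is inherited by the finer $\tau_2$ via $\tau_1 \subseteq \tau_2$, with complementation performed entirely inside $\tau_1$. This is exactly why filters must be barred from $\denied$: a filter path denotes a set that is $\tau_2$-open but not $\tau_1$-open, and its complement is typically open in neither topology, matching the unfair policy of Example~\ref{ex:unfair}.
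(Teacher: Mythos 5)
Your proposal is correct and follows essentially the same route as the paper's proof: both arguments reduce fairness to $\tau_2$-openness of $\AB{\policy}$, observe that removing a closed set from an open set preserves openness, and obtain closedness of $\AB{\denied}$ by noting that $\XPCDS$-paths denote sets that are clopen in the complement-closed partition topology $\tau_1$, which is coarser than $\tau_2$. Your version merely makes explicit the four-case Boolean decomposition and the inclusion $\tau_1 \subseteq \tau_2$ that the paper leaves implicit.
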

\begin{proof}
  Any policy whose negative rules denote a \emph{closed} set is fair,
  since open sets are preserved by removing closed sets.  All sets
  built from paths in $\XPCDS$ are closed in $\tau_1$, hence also
  closed in $\tau_2$, so policies over $\XPCDSF$ with no filters
  in negative rules are fair.
\end{proof}
The converse does not hold; for example, the
policy %
\[(-,-,\{\udelete{/a}\},\{\udelete{/a[b]},\udelete{//*}\})\]
is fair even though it involves negative filter paths, because the
negative rule $\udelete{//*}$ subsumes the
negative rule $\udelete{/a[b]}$.

\subsection{Complexity of fairness for \XPCDSF}
In this section we show that fairness for $\XPCDSF$ policies with respect to
$\XPCF$ is $\coNP$-complete.  These results draw upon  some material and notation from Miklau and Suciu's study of XPath query containment~\cite{miklau04jacm}, which we first review.

A \emph{tree pattern} is a structure $P= (V_P,C_P, D_P, R_P,
\lambda_P)$ such that $C_P \subseteq D_P \subseteq V_P \times V_P$ and $R_P \in
V_P$ and $\lambda_P : V_P \to \Sigma \cup \{*\}$.  In addition, $(V_P,
D_P, R_P, \lambda_P)$ forms an ordinary tree; the edges in
$C_P$ are called child edges and those in $D_P$ are called descendant
edges.  We can think of such a pattern $P$ as a tree with edges
labeled by axes $\kchild$ or $\kdescendant$, nodes labeled by node
tests $l$ or $*$, and with a distinguished node $n$.  A marked tree
pattern $(P,n)$ is a tree pattern with a specific marked node $n \in V_P$.
Path expressions $p
\in \XPCDSF$ are equivalent to tree patterns $(P,n)$, and 
an ordinary marked tree $(U,m)$ is essentially the same as a tree pattern
that has no $\kdescendant$ edges or $*$ nodes.  

A tree $T$ \emph{matches} a tree pattern $P$ if there is a function $h : V_P
\to V_T$ such that 
\begin{enumerate}
\item $h(R_P) = R_T$
\item for all $(v,w) \in C_P$ we have $(h(v),h(w)) \in E_T$
\item for all $(v,w) \in D_P$ we have $(h(v),h(w)) \in E^+_T$
\item for all $v \in V_P$ we have $\lambda_T(h(v)) \in \semantics{\lambda_P(v)}$.
\end{enumerate}
We then say that $h : P \to T$ is a \emph{tree
  pattern embedding}.  Similarly, a marked tree pattern $(P,k)$ matches a
marked tree $(T,n)$ provided there is a tree pattern embedding $h : P
\to T$ such that $h(k) = n$; we then write  $h :
(P,k) \to (T,n)$.

The \emph{star
  length} of a pattern or path is the length of the longest sequence
of $\kchild::*$ steps.  A $(u_1,\ldots,u_d)$-extension of a path $p$
with $d$ descendant edges $(e_1,\ldots,e_d)$ is a path $p[\bar{u}]$
where each descendant edge $e_d$ has been replaced with a path of $u_i$
$\kchild::*$ steps.  A canonical instance of $p$ is obtained by
substituting all occurrences of $*$ in a
$(u_1,\ldots,u_d)$-extension of $p$ with some fresh symbol $z$; this
is written $s^z(p[\bar{u}])$.  The set of canonical instances of $p$
(with replacement symbol $z$)
is $mod^z(p)$, and the set of such instances where the extension
lengths $u_i$ are bounded by $k$ is $mod^z_k(p)$.  
Any containment problem $p \sqleq p'$ is satisfied
if and only if there is no counterexample in $mod^z_{w'+1}(p)$ where
$w'$ is the star length of $p'$.  This implies that the size of a
counterexample is bounded by a polynomial determined by $p$ and $p'$,
since $w' \leq |p'|$ and $d \leq |p|$.

Miklau and Suciu~\cite{miklau04jacm} also give polynomial algorithms
for testing containment of $p,p'$ in special cases. Specifically when
$p \in\XPCSF$, we can test whether $p\sqleq p'$ in polynomial time.
Furthermore, Miklau and Suciu~\cite{miklau04jacm} give an
algorithm for containment that is polynomial when the number of
descendant steps in $p$ is at most $d$.  

\paragraph*{\coNP-Hardness}
Hardness follows by reduction from path containment:

\begin{theorem}\label{thm:conp-hard}
  Determining fairness for $\XPCDSF$ policies with respect to $\XPCF$ is \coNP-hard.
\end{theorem}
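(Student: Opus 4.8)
The plan is to reduce from XPath containment for \XPCDSF, which Miklau and Suciu~\cite{miklau04jacm} showed to be \coNP-complete. The statement asserts \coNP-hardness, so I need a polynomial-time \emph{many-one} reduction from (the complement of) a containment problem to a fairness problem, i.e., I want to build from two paths $p, p' \in \XPCDSF$ a policy $\policy_{p,p'}$ such that $\policy_{p,p'}$ is fair with respect to \XPCF{} if and only if $p \sqleq p'$. Since containment is \coNP-complete, its complement (non-containment) is \textsc{np}-complete; I therefore arrange the reduction so that fairness corresponds to containment (the ``yes'' instances line up), giving \coNP-hardness of fairness directly.

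First I would fix the policy skeleton. Building on Example~\ref{ex:unfair}, the natural candidate is the deny--deny policy
\[
\policy_{p,p'} = (-,-,\{\udelete{p'}\},\{\udelete{p}\}),
\]
whose marked-tree semantics is $\AB{\policy_{p,p'}} = \AB{p'} \setminus \AB{p}$. By Proposition~\ref{prop:fair}, fairness amounts to $\AB{p'}\setminus\AB{p} = \Approx_{\XPCF}(\AB{p'}\setminus\AB{p})$, i.e.\ to this difference being open in $\tau_2$. The intended correspondence is: if $p \sqleq p'$, then $\AB{p}\subseteq\AB{p'}$ (Lemma, part~2), and one should be able to rewrite the difference as a union of basic open sets $\AB{T,n}$, making it open and hence the policy fair. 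Conversely, if $p \not\sqleq p'$, I want the ``leftover'' region to be a non-open set --- a filter-shaped hole exactly as in Example~\ref{ex:unfair} --- witnessing unfairness.

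The main obstacle, and where I would spend the most care, is that raw containment does not map cleanly to openness of a set \emph{difference}: the difference $\AB{p'}\setminus\AB{p}$ can fail to be open even when $\AB{p}\subseteq\AB{p'}$, and can happen to be open even when containment fails (the remark after the preceding theorem already warns that unrelated negative rules can subsume filter paths and restore fairness). So a direct use of arbitrary $p,p'$ is too blunt. The fix is to \emph{engineer} the instance: I would take $p$ to be the path carrying the problematic filter and choose $p'$ (or augment the policy with auxiliary linear/\XPCDS{} rules) so that, using the theorem on \XPCDS-negative rules being closed, every ``spurious'' non-openness is eliminated except the one genuinely caused by the containment question. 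Concretely, I expect to reduce from containment of a \emph{filter path} in a \emph{filter-free} path, or to pad both sides with a common fresh-label prefix, so that the only way the difference can fail to be open is the presence of a counterexample tree to $p\sqleq p'$ --- a tree realizing the filter of $p$ at a node not forced to satisfy any filter of $p'$. Establishing that this counterexample tree yields a marked tree in $\AB{\policy_{p,p'}}$ that lies in no basic open $\AB{T,n}\subseteq\AB{\policy_{p,p'}}$ is the crux, and it is exactly the openness-obstruction already exhibited in Example~\ref{ex:unfair}.

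Finally I would verify that the reduction is polynomial: constructing $\policy_{p,p'}$ from $p,p'$ is a syntactic rewriting of size linear in $|p|+|p'|$, and Miklau and Suciu's bound that a containment counterexample lives in $mod^z_{w'+1}(p)$ (of size polynomial in $|p|,|p'|$) guarantees the witness trees I manipulate stay polynomially bounded, so no blow-up is hidden in the correctness argument. Assembling these pieces --- skeleton policy, padding to kill spurious non-openness, and the Example~\ref{ex:unfair}-style obstruction for the surviving filter --- gives the many-one reduction and hence \coNP-hardness.
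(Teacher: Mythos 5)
There is a genuine gap. You correctly pick the source problem (Miklau--Suciu containment for \XPCDSF), the target shape (a deny--deny policy), and the obstruction mechanism for the unfair direction (failure of closure under homomorphic images, as in Example~\ref{ex:unfair}). But your policy skeleton is oriented the wrong way round: with $\allowed=\{\udelete{p'}\}$ and $\denied=\{\udelete{p}\}$ you get $\AB{\policy_{p,p'}}=\AB{p'}\setminus\AB{p}$, and when $p\sqleq p'$ this difference is typically \emph{not} open --- take $p=/a[b]$ and $p'=/a$, and your ``yes'' instance is exactly the canonical unfair policy $(-,-,\{\udelete{/a}\},\{\udelete{/a[b]}\})$ of Example~\ref{ex:unfair}. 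So the intended correspondence (containment $\Rightarrow$ fair) fails already on the simplest instances. You do flag this obstacle, but the proposed repair (``engineer the instance'', ``pad with a fresh-label prefix'', ``reduce from containment of a filter path in a filter-free path'') is never made concrete, and it is precisely the missing construction that carries the whole proof.

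The paper's fix is to demote both paths to \emph{filters} under a common wildcard step and to swap the roles: $\policy=(-,-,\{/{*}[p]\},\{/{*}[p']\})$, i.e.\ the \emph{contained} path goes in the positive rule and the \emph{containing} path in the negative rule. Then $p\sqleq p'$ forces $/{*}[p]\sqleq/{*}[p']$, so $\AB{\policy}=\emptyset$, which is trivially open and hence fair --- no openness argument for a nonempty difference is ever needed. For the converse, a counterexample $(T,n)\in\AB{p}-\AB{p'}$ is planted under a fresh root to form $U$ (matching $/{*}[p]$ but not $/{*}[p']$), and a $p'$-witness is grafted alongside it to form a homomorphic image $U'$ that additionally matches $/{*}[p']$; the pair $(U,U')$ violates homomorphism-closure of $\AB{\policy}$, hence openness, hence fairness by Proposition~\ref{prop:closed-hom}. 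Your write-up contains the second half of this idea in spirit but not the first, and without the $/{*}[\,\cdot\,]$ wrapping and the role swap the reduction as stated does not compute the right answer.
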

\begin{proof}
 For hardness, the
idea is to encode a containment problem $p \sqleq p'$ as a fairness
problem $(-,-,\{/{*}[p]\},\{/{*}[p']\})$.  Let $p,p'$ be given and define
policy $\policy = (-,-, \{/{*}[p]\},\{/{*}[p']\})$.  Suppose $p \sqleq
p'$.  Then $/{*}[p] \sqleq /{*}[p']$ so $\AB{\policy} = \emptyset$, which
is obviously fair.  Conversely, suppose $p \not\sqleq p'$.  Then there
must be some marked tree $(T,n) \in \AB{p} - \AB{p'}$.  Moreover, all
paths in $\XPCDSF$ are satisfiable so choose some $(T',n') \in
\AB{p'}$.  Let $v$ be a fresh vertex identifier not appearing in $T$ or $T'$,
and assume without loss of generality that $T$ and $T'$ do not
overlap.  Form trees $(U,n)$ and $(U',n)$ as follows: 
\begin{eqnarray*}
  U &=& (V_T \cup \{v\}, E_T \cup \{(v,R_T)\}, v, \lambda_T \cup
\{(v,a)\})\\
 U' &=& (V_U \cup V_{T'}, E_U \cup E_{T'} \cup \{(v,R_{T'})\}, v,
 \lambda_U \cup \lambda_{T'})
\end{eqnarray*}
In other words, $U$ is a copy of $T$ placed under a new root node
labeled $a$, and $U'$ is $U$ extended with a copy of $T'$ under the
root.
Clearly, $U$ matches $/{*}[p]$ and not $/{*}[p']$, whereas $U'$ matches both $/{*}[p]$ and
$/{*}[p']$.  Moreover, it is also easy to see that $U'$ is a homomorphic
image of $U$ (by an inclusion homomorphism).  Thus, $U$ and $U'$ are
witnesses to the fact that $\AB{\policy}$ is not closed under homomorphic
images, which implies $\AB{\policy}$ is not an open set so $\policy$ is unfair.

Thus, $p \sqleq p'$ holds
if and only if $\policy$ is fair.  Since containment of $\XPCDSF$
paths is $\coNP$-hard, fairness is also $\coNP$-hard.
\end{proof}

\paragraph*{\coNP-Completeness}
For \coNP-completeness we need the following lemma:
\begin{proposition}\label{prop:closed-hom}
  A set $Y \subseteq \MTree$ is open in $\tau_2$ if and only if $Y$ is
  closed under homomorphic images; that is, for all $(T,n) \in Y$ and
  $h :  (T,n) \to (U,m)$ we have $(U,m) \in Y$.
\end{proposition}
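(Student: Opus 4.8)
The plan is to recognize that each basic open set $\AB{T,n}$ is exactly the principal up-set of the preorder on $\MTree$ defined by $(T,n) \preceq (U,m)$ iff there is a marked tree homomorphism $h : (T,n) \to (U,m)$; that is, $\AB{T,n} = \{(U,m) \mid (T,n) \preceq (U,m)\}$, which is precisely the definition of $\AB{T,n}$ as the set of homomorphic images of $(T,n)$. With this identification, the proposition reduces to the standard fact that the open sets of the topology generated by the principal up-sets of a preorder are exactly the upward-closed sets, which is just another name for the sets closed under homomorphic images. I would prove the two directions separately, using the fact (the Corollary following Lemma~\ref{lem:intersection}) that $\{\AB{T,n}\}$ is a basis, so that open sets are exactly arbitrary unions of the $\AB{T,n}$.

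For the forward direction, I would take an open set $Y$ and write it as a union of basic open sets $\AB{T_i,n_i}$. I would then show each $\AB{T_i,n_i}$ is closed under homomorphic images by composing homomorphisms: if $(U,m) \in \AB{T_i,n_i}$ via some $h$ and $g : (U,m) \to (W,k)$, then $g \circ h : (T_i,n_i) \to (W,k)$ witnesses $(W,k) \in \AB{T_i,n_i}$. Since a homomorphic image of a point in a union $\bigcup_i S_i$ is a homomorphic image of a point in some $S_i$, closure under homomorphic images is inherited by arbitrary unions, and hence $Y$ is closed under homomorphic images.

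For the reverse direction, given $Y$ closed under homomorphic images, I would exhibit it explicitly as $Y = \bigcup_{(T,n) \in Y} \AB{T,n}$. The inclusion $\subseteq$ holds because every $(T,n) \in Y$ lies in $\AB{T,n}$ via the identity homomorphism (which is a marked tree homomorphism fixing $n$); the inclusion $\supseteq$ holds because any $(U,m) \in \AB{T,n}$ with $(T,n) \in Y$ is a homomorphic image of the point $(T,n) \in Y$ and is therefore in $Y$ by assumption. This presents $Y$ as a union of basis elements, so $Y$ is open.

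I do not expect a genuine obstacle here; the content is almost entirely bookkeeping. The only facts to check are routine: that the identity map is a marked tree homomorphism, and that the composite of two marked tree homomorphisms is again one (so that $h(n) = m$ and $g(m) = k$ yield $(g \circ h)(n) = k$, while the edge and label conditions are preserved by composition). The single conceptual step is the identification of $\tau_2$ with the up-set topology of the homomorphism preorder, after which both inclusions are immediate.
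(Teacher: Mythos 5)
Your proof is correct and follows essentially the same route as the paper: both directions rest on writing an open set as a union of the basic sets $\AB{T,n}$, using composition of marked tree homomorphisms for the forward direction and the representation $Y = \bigcup\{\AB{T,n} \mid (T,n) \in Y\}$ for the reverse. The framing via the up-set topology of the homomorphism preorder is a pleasant way to package the same bookkeeping, but the underlying argument is identical.
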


\begin{proof}
  If $Y$ is open, then suppose $(T,n)$ is a point in $Y$ and $h :
  (T,n) \to (U,m)$.  Since $Y$ is the union of basic open sets, there
  must be some $(V,k)$ such that $(T,n) \in \AB{V,k} \subseteq Y$.
  That is, there is a homomorphism $g : (V,k)\to (T,n)$.  Hence, $h
  \comp g : (V,k)\to (U,m)$ so $(U,m) \in \AB{V,k} \subseteq Y$, as
  desired. 

  Conversely, if $Y$ is closed under homomorphic images, then we will
  show that $Y = \bigcup\{\AB{T,n} \mid (T,n) \in Y\}$.  The
  $\subseteq$ direction is immediate since $(T,n) \in \AB{T,n}$; on
  the other hand, for each $(T,n) \in Y$, it follows that $\AB{T,n}
  \subseteq Y$ since each element of $\AB{T,n}$ is a homomorphic
  image of $(T,n) \in Y$. Hence, $Y$ is a union of basic open sets, so
  it is open. 
\end{proof}

The basic idea of the proof of the \coNP\ upper bound is as follows.
We need to show that for any policy $\policy$, it suffices to consider
a finite set of trees (of size bounded by a polynomial in the policy
size) in order to decide whether $\policy$ is closed under
homomorphisms.  To illustrate, let a counterexample consisting of
trees $(T,n)$ and $(T',n)$ and homomorphism $h : (T,n) \to (T',n)$ be
given, such that $(T,n) \in \AB{\policy}$ and $(T',n) \not\in
\AB{\policy}$.

  First, consider a deny--deny policy, so that $(T,n) \in
  \AB{\allowed} - \AB{\denied}$ and $(T',n) \notin \AB{\allowed} - \AB{\denied}$.   Since $\AB{\allowed}$ is open and $(T,n) \in
  \AB{\allowed}$, it follows that $(T',n) \in \AB{\allowed}$, so we must have
  $(T',n) \in \AB{\denied}$.  Moreover, there must exist paths $p\in \allowed$ and
  $p'\in \denied$ such that $(T,n) \in \AB{p}$ and $(T',n) \in \AB{p'}$.  It
  is easy to see that $(T',n) \in \AB{p}$ also, while $(T,n) \not\in
  \AB{\denied}$ means that $(T,n)$ does not match any path in
  $\AB{\denied}$.  Observe that $(T,n)$ and $(T',n)$ could be much
  larger than $\policy$.  It suffices to show that we can shrink
  $(T,n)$ and $(T',n)$ to a small counterexample by deleting nodes
  and edges that do not affect satisfiability of $p,p'$, using similar
  techniques to those used by Miklau and Suciu~\cite{miklau04jacm}.
  They considered how to shrink a counterexample to the containment
  problem $p \sqleq p'$, consisting of a single tree, whereas we need
  to shrink $(T,n)$, $(T',n)$ and $h$ while ensuring that $h$ is
  still a homomorphism, and also that the
  shrinking process does not cause the first tree to satisfy some
  other path in $\denied$.  Thus, it suffices to search for small
  ($O(|\policy|^3)$) counterexamples.

  The reasoning for other kinds of policies (allow--allow, etc.) is
  similar. This in turn gives a \coNP-time decision procedure to
  determine fairness: first we guess a pair of trees $(T,n)$, $(T',n)$
  with $h : (T,n) \to (T',n)$ and $|T|,|T'| \leq O(|\policy|^3)$, then check
  whether $(T,n) \in \AB{\policy}$ and $(T',n) \notin
  \AB{\policy}$. If no such counterexamples exist, then $\policy$ is
  fair.

The proof makes use of the following facts which are immediate or
proved by Miklau and Suciu~\cite{miklau04jacm}.
\begin{lemma}[\cite{miklau04jacm}]\label{lem:shrinking}
  \begin{enumerate}
  \item   If $h : (P,n) \to (T,n)$ is an embedding witnessing that
    $(T,n)$ matches some path $p$ with $\AB{p} = \AB{P,n}$, and
    $(T',n)$ is a subtree of $T$ such that $rng(h) \subseteq V_{T'}$,
    then $h : (P,n) \to (T',n)$ witnesses that $(T',n) $ matches $p$.
  \item If $(T,n) \notin \AB{p}$ and $(T',n)$ is obtained by
    removing any subtree from $(T,n)$ then $(T',n) \notin \AB{p}$.
  \item If $(T,n)$ contains a path of child steps of
    length $> w+1$, where $w$ is the star length of $p$, $z$ is not present in $p$, and each node
    along the path is labeled $z$, then we can form $(T',n)$ by
    removing one of the steps, such that  $(T,n) \in \AB{p} \iff (T',n) \in \AB{p}$.
  \end{enumerate}
\end{lemma}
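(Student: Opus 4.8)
The plan is to handle the three parts separately: the first two are bookkeeping about tree-pattern embeddings together with the monotonicity already established, while the third is the genuine content imported from Miklau and Suciu.

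For part~1, first I would fix the reading of ``subtree'': $(T',n)$ is obtained from $(T,n)$ by deleting whole subtrees, so $R_{T'}=R_T$, $V_{T'}\subseteq V_T$, and the child and descendant relations of $T'$ are exactly the restrictions of those of $T$. (Removing a subtree never separates a surviving node from a surviving ancestor, since every ancestor of a surviving node also survives.) Given the embedding $h:(P,n)\to(T,n)$ with $\rng(h)\subseteq V_{T'}$, I would then check the four embedding conditions one at a time: $h$ sends $R_P$ to $R_T=R_{T'}$; each child edge of $P$ maps to an edge of $T$ between two nodes of $\rng(h)\subseteq V_{T'}$, hence to an edge of $T'$; each descendant edge maps to an ancestor pair that survives in $T'$; and the label conditions are untouched since neither $h$ nor the labelling changes. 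As $n\in\rng(h)\subseteq V_{T'}$, the marked node is preserved, so $h$ still witnesses $(T',n)\in\AB{p}$.

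Part~2 is dual, and I would obtain it from monotonicity rather than redo the calculation. The identity map is a marked tree homomorphism $(T',n)\to(T,n)$ (root to root, edges to edges, labels preserved), precisely because $T'$ is a subtree of $T$ in the sense above. Since $p\in\XPCDSF$ the set $\AB{p}$ is open in $\tau_2$, hence closed under homomorphic images by Proposition~\ref{prop:closed-hom}; therefore $(T',n)\in\AB{p}$ would force $(T,n)\in\AB{p}$, and the contrapositive is exactly the claim. (Equivalently, one composes any embedding $(P,n)\to(T',n)$ with the inclusion to get an embedding into $T$.)

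The hard part is part~3, which is the pumping lemma of Miklau and Suciu, so I would follow their argument and cite it for the details. The two key observations are: since $z$ occurs in no node test of $p$ yet labels every chain node, any pattern node that an embedding sends into the chain must satisfy $z\in\semantics{\lambda_P(u)}$, forcing $\lambda_P(u)=*$; and since the star length of $p$ is $w$, at most $w$ wildcards can be linked by consecutive child edges, so no embedding can pin down more than $w+1$ consecutive chain nodes by child constraints. Because the chain has more than $w+1$ child steps, I would locate an interior step not forced as a child edge by the relevant embeddings and contract it, splicing its parent to its child. For the forward direction one adjusts a given embedding across the contracted point, the surrounding descendant edges absorbing the shortening; for the reverse direction, reinserting a $z$-node only lengthens descendant chains and never breaks a child constraint, exactly because the removed step was not needed as a child edge, and all labels remaining $z$ keeps every wildcard match valid. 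The delicate point, and the main obstacle, is the pigeonhole bookkeeping guaranteeing that such a contractible step always exists once the chain exceeds $w+1$; this is the counting carried out by Miklau and Suciu~\cite{miklau04jacm}, which I would invoke rather than reproduce.
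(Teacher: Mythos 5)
Your proposal is correct and matches the paper's treatment: the paper offers no proof of this lemma, declaring the three facts ``immediate or proved by Miklau and Suciu,'' and you do the same in substance --- parts~1 and~2 are the routine verifications (your detour through openness of $\AB{p}$ and Proposition~\ref{prop:closed-hom} for part~2 is legitimate since both are established before this lemma is used, though the direct composition-of-embeddings argument you mention parenthetically is even more elementary), while part~3 is correctly identified as the Miklau--Suciu pumping argument and deferred to the citation exactly as the paper intends.
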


\begin{theorem}\label{thm:conp-decidable}
  Deciding whether a policy in $\XPCDSF$ is fair with
  respect to $\XPCF$ is \coNP-complete.
\end{theorem}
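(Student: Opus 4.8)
The lower bound is exactly Theorem~\ref{thm:conp-hard}, so the plan is to establish the matching \coNP\ upper bound. By Proposition~\ref{prop:closed-hom} together with the topological characterization of fairness, $\policy$ is fair if and only if $\AB{\policy}$ is closed under homomorphic images. Hence $\policy$ is \emph{unfair} precisely when there is a counterexample: a pair of marked trees $(T,n),(T',n)$ and a homomorphism $h:(T,n)\to(T',n)$ with $(T,n)\in\AB{\policy}$ but $(T',n)\notin\AB{\policy}$. I would show that whenever such a counterexample exists, one of size $O(|\policy|^3)$ exists; the \coNP\ procedure then guesses a small triple $(T,n),(T',n),h$ and verifies that $h$ is a homomorphism, that $(T,n)\in\AB{\policy}$, and that $(T',n)\notin\AB{\policy}$. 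Each membership test reduces to matching the polynomially many tree patterns of $\allowed$ and $\denied$ against a fixed small tree, which is polynomial, so unfairness is in NP and fairness in \coNP.

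The core of the argument is the polynomial bound on a minimal counterexample, which I would obtain by a coordinated shrinking of $T$, $T'$ and $h$. Take the deny--deny case as representative, so $\AB{\policy}=\AB{\allowed}\setminus\AB{\denied}$. Since $\AB{\allowed}$ is open and membership in open sets is preserved along $h$, $(T,n)\in\AB{\allowed}$ forces $(T',n)\in\AB{\allowed}$, so $(T',n)\notin\AB{\policy}$ can only arise from $(T',n)\in\AB{\denied}$. Thus the counterexample is witnessed by an allowed path $p$ with $(T,n)\in\AB{p}$, a denied path $p'$ with $(T',n)\in\AB{p'}$, together with the facts that $(T,n)$ matches no path of $\denied$ and that $(T',n)$ is a homomorphic image of $(T,n)$. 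I record the embeddings $e:(P_p,k)\to(T,n)$ and $e':(P_{p'},k')\to(T',n)$ witnessing these matches. The remaining policy shapes are analogous: the deny--allow case is trivially fair since $\AB{\policy}=\AB{\allowed}$ is open, and in each other case openness of $\AB{\allowed}$ again forces the only possible counterexample shape, differing only in which sets $(T,n)$ and $(T',n)$ must or must not belong to.

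With the witnesses fixed, I shrink in two phases, using Lemma~\ref{lem:shrinking}. First, I restrict $T$ to the smallest rooted subtree $T_0$ containing $\rng(e)$; by Lemma~\ref{lem:shrinking}(1) this preserves the match of $p$, and by Lemma~\ref{lem:shrinking}(2) deleting subtrees can only preserve the non-matching of every denied path, so $(T_0,n)\in\AB{\policy}$ still holds. Restricting $h$ to $T_0$ yields a homomorphism into $T'$, and I then take $T'_0$ to be the rooted subtree of $T'$ spanned by $\rng(e')$ and $h(V_{T_0})$; this keeps $(T'_0,n)\in\AB{p'}\subseteq\AB{\denied}$ and makes $h|_{T_0}:(T_0,n)\to(T'_0,n)$ a homomorphism. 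At this point the branching nodes are bounded by $|p|+|p'|$, but the trees may still contain long chains coming from descendant edges and from ancestor-closure along those edges. In the second phase I collapse these: after assigning a fresh label $z$ to nodes not pinned down by any node test, Lemma~\ref{lem:shrinking}(3) lets me shorten any chain of $z$-labeled child steps longer than one more than the star length of the relevant path, and the bounded-extension result of Miklau and Suciu bounds each descendant chain by the star length plus one. Since star lengths and the numbers of descendant edges are all at most $|\policy|$, this yields $|T_0|,|T'_0|=O(|\policy|^3)$.

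The main obstacle is precisely this coordinated shrinking: unlike Miklau and Suciu, who shrink a single tree against a single containment, I must shrink two trees and the homomorphism between them at once, while simultaneously keeping $(T,n)$ out of the \emph{entire} denied set, keeping $(T',n)$ inside some denied path, and ensuring every deletion or chain-collapse on one side is mirrored correctly on the other so that $h$ remains a homomorphism. The delicate point is that shortening a chain in $T$ must not inadvertently create a match to a denied path, nor destroy the denied match in $T'$ after transport along $h$; reconciling these constraints for the collapse steps is where the technical care is required, and handling it uniformly across the four policy shapes is the bulk of the work.
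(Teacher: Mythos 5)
Your proposal follows essentially the same route as the paper's proof: hardness via Theorem~\ref{thm:conp-hard}, the homomorphism-closure characterization from Proposition~\ref{prop:closed-hom}, a case split on the policy's default/conflict-resolution semantics using openness of $\AB{\allowed}$, and a coordinated shrinking of $(T,n)$, $(T',n)$ and $h$ via Lemma~\ref{lem:shrinking} to obtain a polynomial-size counterexample. The only cosmetic difference is that the paper packages the constraints on $T'$ into an explicit intersection expression $p''$ with $\AB{p''}\subseteq\AB{p}\cap\AB{p'}$, whereas you keep the subtree spanned by $\rng(e')\cup h(V_{T_0})$ directly; these amount to the same bookkeeping.
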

\begin{proof}
  For deny--deny policies, suppose $(T,n) \in \AB{\policy}$ and
  $(T',n) \not\in \AB{\policy}$ where $h : (T,n) \to (T',n)$.   This
  implies that there exists $p \in \allowed, p' \in \denied$ with $(T,n) \in
  \AB{p} - \AB{\denied}$ and $(T',n) \in \AB{p'}$.  Construct expression
  $p''$ such that $(T',n) \in \AB{p''} \subseteq \AB{p} \cap \AB{p'} $ and $|p''| \leq |p|+|p'|$.   Without loss of
  generality assume wherever not required by matching $p,p'$, the labels of $T,T'$ are some $z \in \Sigma$
  not appearing in $\policy$.  (Relabeling $T,T'$ in this way cannot
  affect whether they satisfy $\policy$ since $z$ does not appear
  there).  Then, using Lem.~\ref{lem:shrinking} we can shrink $T,T'$ and $h$ by removing
  subtrees that are not needed to ensure that $T,T'$ match $p, p''$
  respectively; moreover, we can maintain $h$ so that it remains a
  homomorphism through this process.  This yields trees where every
  leaf node is needed for matching $p,p''$, but where there may still
  exist long chains of $z$s that are only needed to match descendant
  steps in $p$ or $p''$.  However, again using
  Lem.~\ref{lem:shrinking} we can remove $z$-labeled
  nodes from any chains longer than $W+1$, where $W$ is the maximum
  star length of any path in $\policy$, and we can maintain $h$ so
  that it remains a homomorphism.  Call the resulting trees $(U,n),
  (U',n)$.  By the above lemma, $(U,n) \in \AB{p} - \AB{\denied}$ and
  $(U',n) \in \AB{p''}$ still hold since $W$ is larger than the star
  height of any path in $\denied$.  Moreover, $U,U'$ have at most
  $(|p|+|p'|)(W+1)$ nodes because any two nodes needed for matching
  $p,p''$ can be separated by a chain of at most $W+1$ $z$-nodes.

  For allow--allow policies, the reasoning is slightly different.  If
  $(T,n) \in \AB{\policy}$ but $(T',n) \notin\AB{\policy}$ then
  $(T,n)$ cannot match $\AB{\allowed}$ because if it did, then so would
  $(T',n)$.  Thus, $(T,n)$ does not match $\denied$ either.  Similarly, 
  $(T',n)$ must match some negative rule $p\in \denied$ and no positive
  rules $\in \allowed$.
  The rest of the argument is similar; we obtain a small
  counterexample by replacing unimportant node labels with some fresh
  $z$, removing subtrees, and shortening long chains of $z$s.

  The allow--deny and deny--allow cases are special cases of the
  above.  Hence, in any case, to decide whether $\policy$ is
  homomorphism-closed it suffices to check for counterexamples among
  trees of size bounded by   $(|p|+|p'|)(W+1)$.
  \end{proof}

\subsection{Polynomial-time static enforcement}

Fairness ensures static enforceability, but the problem of checking
whether an update operation is statically allowed by a policy can
still be expensive.  Consider the common case of a deny--deny policy.
An update $U$ is statically allowed if and only if it is contained in
$\allowed$, and does not overlap with $\denied$.  Overlap testing is decidable in
polynomial time~\cite{hammerschmidt05ideas}, but containment of XPath
expressions involving unions is \coNP-complete.  This high complexity
is however dependent only on the policy and update size, not the size
of the data, so may still be acceptable in practice; also,
efficient-in-practice solvers are being developed for XPath
containment and overlap tests~\cite{geneves07pldi}.

We can take advantage of several observations to obtain efficient
algorithms for special cases.  First, we identify classes of XPath
queries satisfying the following \emph{union decomposition} property:
\begin{eqnarray}
  \label{eq:union}
  p \sqleq p_1 | \cdots | p_n & \iff & p \sqleq p_1 \vee \cdots \vee p
  \sqleq p_n
\end{eqnarray}

We need some auxiliary lemmas:


\begin{lemma}\label{lem:decomp-filter}
  Suppose $p \in \XPCF$ and $Y_1,\ldots,Y_n$ are open sets in
  $\tau_2$.  Then $\AB{p} \subseteq Y_1 \cup \cdots \cup Y_n$ if and
  only if $\AB{p} \subseteq Y_1 \vee \cdots \vee \AB{p} \subseteq Y_n$.
\end{lemma}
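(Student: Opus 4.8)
The plan is to exploit the fact that, for $p \in \XPCF$, the open set $\AB{p}$ is generated by a single generic marked tree. As noted earlier in the paper, such a $p$ corresponds to a marked tree $(T,n)$ (an ordinary tree pattern with only child edges and no wildcards) with $\AB{p} = \AB{T,n}$, the set of all homomorphic images of $(T,n)$. In particular $(T,n) \in \AB{p}$ via the identity homomorphism, and every other point of $\AB{p}$ is a homomorphic image of $(T,n)$.

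The backward direction is immediate: if $\AB{p} \subseteq Y_i$ for some $i$, then a fortiori $\AB{p} \subseteq Y_1 \cup \cdots \cup Y_n$.

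For the forward direction I would argue as follows. Suppose $\AB{p} \subseteq Y_1 \cup \cdots \cup Y_n$. Since the generic point $(T,n)$ lies in $\AB{p}$, it lies in some $Y_i$. Now I invoke Proposition~\ref{prop:closed-hom}: each $Y_i$, being open in $\tau_2$, is closed under homomorphic images. Because every element of $\AB{p} = \AB{T,n}$ is a homomorphic image of $(T,n)$ and $(T,n) \in Y_i$, closure under homomorphic images gives $\AB{T,n} \subseteq Y_i$, i.e. $\AB{p} \subseteq Y_i$, as required.

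The essential ingredient is that a filter path has a single generic marked tree generating its entire denotation, so that membership of that one point in the union already pins down a single $Y_i$ absorbing all of $\AB{p}$. There is no genuine obstacle in the argument itself; the only delicate point is the hypothesis $p \in \XPCF$. Once descendant steps or wildcards are allowed, $\AB{p}$ is in general a union of several generic sets $\AB{T_i,n_i}$ with no common generating point, and a containment $\AB{p} \subseteq Y_1 \cup \cdots \cup Y_n$ can split across distinct $Y_i$; this is precisely why the decomposition property~(\ref{eq:union}) requires the filter-path restriction and fails for richer fragments.
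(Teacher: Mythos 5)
Your proof is correct and follows exactly the paper's argument: pick the generic marked tree $(T,n)$ with $\AB{p}=\AB{T,n}$, locate it in some $Y_i$, and use Proposition~\ref{prop:closed-hom} (openness $=$ closure under homomorphic images) to absorb all of $\AB{p}$ into that $Y_i$. No differences worth noting.
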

\begin{proof}
  Clearly $\AB{p}$ is nonempty, and as discussed in
  Sec.~\ref{sec:results} $\AB{p}$ contains a tree $(T,n)$ such that
  $\AB{p} = \AB{T,n}$.  Thus, $(T,n) \in Y_1 \cup \cdots \cup
  Y_n$, so for some $i$ we have $(T,n) \in Y_i$.  By
  Prop.~\ref{prop:closed-hom} we know that $Y_i$ is closed under
  homomorphic images of $(T,n)$, but the set of homomorphic images of
  $(T,n)$ is precisely $\AB{T,n} = \AB{p}$. 
\end{proof}
\begin{corollary}
    Suppose $p \in \XPCF$ and $p_1,\ldots,p_n \in \XPCDSF$.  Then $p \sqleq p_1 | \cdots | p_n$ if and
  only if $p \sqleq p_1 \vee \cdots \vee p \sqleq p_n$.
\end{corollary}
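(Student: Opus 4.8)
The plan is to derive this corollary directly from Lemma~\ref{lem:decomp-filter} by observing that each $p_i \in \XPCDSF$ denotes an open set in $\tau_2$, and that path containment $\sqleq$ coincides with set containment $\subseteq$ on the marked-tree semantics $\AB{\cdot}$. The union operator $|$ on the right-hand side has the expected semantics $\AB{p_1 | \cdots | p_n} = \AB{p_1} \cup \cdots \cup \AB{p_n}$, so the containment $p \sqleq p_1 | \cdots | p_n$ is exactly the statement $\AB{p} \subseteq \AB{p_1} \cup \cdots \cup \AB{p_n}$.

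First I would set $Y_i = \AB{p_i}$ for each $i$. By the proposition establishing that $\AB{p_i}$ is open in $\tau_2$ for every $p_i \in \XPCDSF$, each $Y_i$ is an open set, so the hypotheses of Lemma~\ref{lem:decomp-filter} are met (together with $p \in \XPCF$). Applying the lemma yields
\[
\AB{p} \subseteq Y_1 \cup \cdots \cup Y_n \iff \AB{p}\subseteq Y_1 \vee \cdots \vee \AB{p} \subseteq Y_n.
\]
Rewriting $Y_i$ as $\AB{p_i}$ and using the correspondence between $\sqleq$ and $\subseteq$ from the summarizing Lemma (part~2) translates each $\AB{p} \subseteq \AB{p_i}$ back into $p \sqleq p_i$, and the left-hand side back into $p \sqleq p_1 | \cdots | p_n$, which is precisely the claimed equivalence.

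There is essentially no obstacle here beyond bookkeeping: the real work was done in Lemma~\ref{lem:decomp-filter}, whose proof relies on the key fact that an $\XPCF$ path $p$ denotes a \emph{principal} open set $\AB{p} = \AB{T,n}$ generated by a single marked tree, so that containment of $\AB{p}$ in a union forces containment in one disjunct (via homomorphism-closure, Prop.~\ref{prop:closed-hom}). The only point requiring a word of care is that the disjunct paths $p_i$ range over the larger fragment $\XPCDSF$ rather than $\XPCF$; this is harmless precisely because the decomposition lemma only needs the \emph{right-hand} sets to be open, which holds for all of $\XPCDSF$, while the single query $p$ on the left must lie in $\XPCF$ to guarantee it generates a principal open set.
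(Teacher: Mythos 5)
Your proposal is correct and matches the paper's (implicit) argument exactly: the corollary is obtained by instantiating Lemma~\ref{lem:decomp-filter} with $Y_i = \AB{p_i}$, using openness of $\AB{p_i}$ in $\tau_2$ for $p_i \in \XPCDSF$ and the equivalence between $\sqleq$ and $\subseteq$ on the marked-tree semantics. Your closing remark about why the $p_i$ may range over the larger fragment is also the right observation.
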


Next, in order to prove union decomposition for containment problems
whose left-hand side involves wildcards, we introduce relabeling
functions $\rho : \Sigma \to \Sigma$.  We define $\rho(T)$ in the
obvious way: specifically,
\[\rho(T) = (V_T, E_T, R_T, \rho \circ \lambda_T)\;.\]
Similarly, $\rho(T,n) = (\rho(T),n)$ and $\rho(T,n,m) =
(\rho(T),n,m)$; furthermore if $Y$ is a set of (marked) trees then
$\rho(Y) = \{\rho(y) \mid y \in Y\}$.
\begin{definition}
  Suppose $C \subseteq \Sigma$ is finite.  We say that $\rho$
  \emph{fixes} $C$ if $\rho(c) = c$ for each $c \in C$.  A set $Y$ of
  trees, marked trees or doubly marked trees is called
  \emph{$C$-invariant} if for all $\rho$ fixing $C$, we have $\rho(Y)
  \subseteq Y$.  In other words, $Y$ is closed under relabelings that
  replace labels in $\Sigma - C$ with arbitrary labels.
\end{definition}

We define the function  $\labels$  mapping each path to the finite set of labels appearing in it, and likewise $\labelsQ$ mapping each filter to its finite set of labels:
\begin{eqnarray*}
  \labels(ax::a) &=& \{a\}\\
\labels(ax::*) &=& \emptyset\\
\labels(p/p') &=& \labels(p) \cup \labels(p')\\
\labels(p[q]) &=& \labels(p) \cup \labelsQ(q)\\
\labelsQ(\ktrue) &=& \emptyset\\
\labelsQ(q\kand q') &=& \labelsQ(q) \cup \labelsQ(q')\\
\labelsQ(p) &=& \labels(p)
\end{eqnarray*}
Thus, for example, $\labels(/a//{*}/b[c]) = \{a,b,c\}$ is the set of
specific labels appearing in $/a//{*}/b[c]$.  The semantics of a path
$p$ is $\labels(p)$-invariant:

\begin{lemma}\label{lem:invariant}
  For any $p \in \XPCDSF$:
  \begin{enumerate}
  \item $P\AB{p}$ is $\labels(p)$-invariant.
  \item $Q\AB{q}$ is $\labelsQ(q)$-invariant.
\item $\AB{p}$ is $\labels(p)$-invariant.
  \end{enumerate}
\end{lemma}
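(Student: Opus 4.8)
The plan is to prove all three claims simultaneously by mutual structural induction on $p$ and $q$, unfolding the definition of $C$-invariance. Concretely, for part~1 I would show that whenever $\rho$ fixes $\labels(p)$ and $(T,v,w) \in P\AB{p}$, we also have $(\rho(T),v,w) \in P\AB{p}$; parts~2 and~3 are the analogous statements for $Q\AB{q}$ with $\labelsQ(q)$ and for $\AB{p}$ with $\labels(p)$. The single observation that drives everything is that a relabeling leaves the node set, edge relation and root untouched --- $V_{\rho(T)} = V_T$, $E_{\rho(T)} = E_T$, $R_{\rho(T)} = R_T$ --- and changes only the labeling, via $\lambda_{\rho(T)} = \rho \comp \lambda_T$. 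Since the only axes occurring in $\XPCDSF$ are $\kself$, $\kchild$ and $\kdescendant$, whose semantics $A\eval{\alpha}{T}$ are defined purely from identity and the (transitive closure of the) edge relation, we get $A\eval{\alpha}{\rho(T)} = A\eval{\alpha}{T}$ for free. (Attribute axes and equality tests never arise in this fragment, so no label-sensitive axis needs to be considered.)

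The real content sits entirely in the base cases, where the node test is consumed. For a step $\alpha::a$ we have $\labels(\alpha::a) = \{a\}$, so any $\rho$ fixing this set satisfies $\rho(a) = a$; hence if $(v,w) \in A\eval{\alpha}{T}$ with $\lambda_T(w) = a$, then $\lambda_{\rho(T)}(w) = \rho(a) = a$ and the step is still matched in $\rho(T)$. For a wildcard step $\alpha::*$ we have $\labels(\alpha::*) = \emptyset$, so $\rho$ is arbitrary; but $\lambda_{\rho(T)}(w) = \rho(\lambda_T(w)) \in \Sigma = \semantics{*}$, so the wildcard remains matched under any relabeling. This is precisely where the definition of $\labels$ on axis steps is calibrated to the semantics $\semantics{\phi}$, and it is the only place label information enters the argument.

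The inductive cases are then routine propagation of the induction hypotheses. For $p/p'$, a relabeling $\rho$ fixing $\labels(p) \cup \labels(p')$ fixes each of $\labels(p)$ and $\labels(p')$, and because the node set is unchanged the same intermediate witness $x$ that justified membership in $P\AB{p/p'}$ continues to work after applying the hypotheses to $p$ and to $p'$. The filter case $p[q]$ combines part~1 for $p$ with part~2 for $q$ at the marked node, which is why the induction is genuinely mutual. For part~2, $\ktrue$ is invariant trivially, $q_1 \kand q_2$ follows because an intersection of invariant sets is invariant, and a path test $q = p$ reduces to part~1 by discharging the existential witness. Finally part~3 follows from part~1 together with $R_{\rho(T)} = R_T$, since $(T,v) \in \AB{p}$ means $(T,R_T,v) \in P\AB{p}$. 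I do not expect any serious obstacle here: the only point requiring care is the base-case bookkeeping described above, together with the discipline of applying the correct part of the mutual hypothesis to each subexpression.
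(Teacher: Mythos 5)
Your proposal is correct and follows essentially the same route as the paper's proof: a mutual structural induction in which the only label-sensitive work happens at the base cases $\alpha::a$ and $\alpha::*$, using the fact that relabeling preserves the node set, edges and root so axis semantics are unchanged, with part 3 derived from part 1 at the root. Your explicit check that the wildcard remains matched because $\rho(\lambda_T(w)) \in \Sigma = \semantics{*}$ is a slightly more careful statement of a step the paper leaves implicit, but the argument is the same.
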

\begin{proof}
  Given $p$, define $C$ to be the set of all node labels from $\Sigma$
  appearing in $p$. We proceed to prove parts (1,2) by simultaneous
  induction on the structure of path expressions and filters.

  First, consider the case $ax::a$.  Observe that $\labels(ax::a) =
  \{a\}$, so let $\rho$ be given such that $\rho(a) = a$ and suppose
  $(T,n,m) \in P\AB{ax::a}$.  Then $(n,m) \in A\semantics{ax}(T)$ and
  $\lambda_T(m) = a$.  Since the semantics of axis steps depends only
  on $E_T$ we know that $(n,m) \in A\semantics{ax}(\rho(T))$, and
  since $\rho(a) = a$ we know that $\lambda_{\rho(T)}(m) =
  \rho(\lambda_T(m)) = \rho(a) = a$, which implies $(\rho(T),n,m) \in
  P\AB{ax::a}$.

  Next, for the case $ax::*$, observe that $\labels(ax::*) =
  \emptyset$, so we must consider arbitrary renamings $\rho$.  Let
  $\rho$ be given and suppose $(T,n,m) \in P\AB{ax::*}$.  Then $(n,m)
  \in A\semantics{ax}(T)$, and since the semantics of axis steps
  depends only on $E_T$ we can conclude $(n,m) \in
  A\semantics{ax}(\rho(T))$, which implies $(\rho(T),n,m) \in
  P\AB{ax::*}$.

  The cases for $p/p'$, $p[q]$, and filters are straightforward.  For
  example, let $\rho$ fixing $\labels(p/p') = \labels(p)
  \cup \labels(p')$ be given, and suppose $(T,n,m) \in P\AB{p/p'}$.  Then there is some $k$
  such that $(T,n,k) \in P\AB{p}$ and $(T,k,m) \in P\AB{p'}$.
  Clearly, $\rho$ fixes $\labels(p)$ and $\labels(p')$ so
  $(\rho(T),n,k) \in P\AB{p}$ by induction and similarly $(\rho(T),k,m)
  \in P\AB{p'}$.  So, we can conclude that $(\rho(T),n,m) \in P\AB{p/p'}$.

Finally, for part (3) if $(T,n) \in \AB{p}$ then $(T,R_T,n) \in P\AB{p}$ so
$(\rho(T),R_T,n) \in P\AB{p}$ and we can conclude that $(\rho(T),n)
\in \AB{p}$.
\end{proof}

\begin{lemma}\label{lem:decomp-invariant}
  Suppose $p \in \XPCSF$ and $Y_1,\ldots,Y_n$ are open in
  $\tau_2$ and assume that each $Y_i$ is $C$-invariant for some fixed
  $C$.  Then $\AB{p} \subseteq Y_1 \cup \cdots \cup Y_n$ if and only if
  $\AB{p} \subseteq Y_1 \vee \cdots \vee \AB{p} \subseteq Y_n$.
\end{lemma}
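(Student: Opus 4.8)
The plan is to adapt the proof of Lemma~\ref{lem:decomp-filter}, coping with wildcards by fixing a single canonical witness tree for $p$ and then using $C$-invariance to push that witness onto every point of $\AB{p}$. The direction ``$\AB{p}\subseteq Y_i$ for some $i$ implies $\AB{p}\subseteq Y_1\cup\cdots\cup Y_n$'' is trivial, so I focus on the converse. Since $p\in\XPCSF$ has no descendant steps, it corresponds to a tree pattern $(P,k)$ all of whose edges are child edges and whose nodes are labeled either by specific labels from $\labels(p)$ or by $*$. The key idea is to replace each $*$-labeled node $v_j$ of $P$ by a \emph{distinct} fresh element label $z_j\notin C\cup\labels(p)$, obtaining an ordinary marked tree $(T',k)$. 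Using distinct labels, rather than one shared fresh label as in the filter-only case, is the crucial point: it will let me relabel each wildcard position independently.

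First I would check that $(T',k)\in\AB{p}$: the identity map is a tree pattern embedding $(P,k)\to(T',k)$, since each specific label matches itself and each $z_j\in\semantics{*}$. Hence $(T',k)\in\AB{p}\subseteq Y_1\cup\cdots\cup Y_n$, so $(T',k)\in Y_i$ for some fixed $i$. I then claim $\AB{p}\subseteq Y_i$, which finishes the proof.

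To prove the claim, fix an arbitrary $(U,m)\in\AB{p}$, witnessed by an embedding $h:(P,k)\to(U,m)$. Define a relabeling $\rho$ that fixes $C\cup\labels(p)$ and sends each fresh label $z_j$ to $\lambda_U(h(v_j))$; this is well-defined because the $z_j$ are distinct and lie outside $C\cup\labels(p)$, and in particular $\rho$ fixes $C$. I would then verify that $h$ is a genuine tree homomorphism $h:(\rho(T'),k)\to(U,m)$: the root, marked-node, and edge conditions hold because $h$ is an embedding of $(P,k)$ whose edges are all child edges, while the label condition holds by construction of $\rho$ --- on a wildcard node $v_j$ we have $\lambda_{\rho(T')}(v_j)=\rho(z_j)=\lambda_U(h(v_j))$, and on a specific-label node $v$ with label $a$ the embedding forces $\lambda_U(h(v))=a=\rho(a)$. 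Since $(T',k)\in Y_i$ and $\rho$ fixes $C$, $C$-invariance gives $(\rho(T'),k)\in Y_i$; since $Y_i$ is open, Prop.~\ref{prop:closed-hom} says it is closed under homomorphic images, so $(U,m)\in Y_i$. As $(U,m)$ was arbitrary, $\AB{p}\subseteq Y_i$.

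The main obstacle is exactly the wildcard handling: a single shared fresh label would force all wildcard positions to be relabeled identically, which fails whenever $h$ sends two wildcard nodes to differently-labeled nodes of $U$. Using distinct fresh labels together with per-position relabeling removes this obstacle. It is also worth noting that the restriction to $\XPCSF$ (no descendant axis) is what makes the whole strategy possible: without descendant edges the pattern $(P,k)$ has a fixed shape, so a single finite canonical instance $(T',k)$ suffices, whereas with descendant steps one would need infinitely many canonical instances of varying depth and could not select a single representative point.
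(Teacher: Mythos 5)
Your proof is correct, and it rests on the same key idea as the paper's --- replacing each wildcard occurrence of $p$ with a \emph{distinct} fresh label drawn from $\Sigma - (C \cup \labels(p))$ and then using $C$-invariance to transfer membership in a single $Y_i$ to all of $\AB{p}$ --- but the mechanics run in the opposite direction, and the difference is worth recording. The paper forms the rigidified path $p'$, invokes Lemma~\ref{lem:decomp-filter} to obtain the full containment $\AB{p'} \subseteq Y_i$, and then, for an arbitrary $(T,n) \in \AB{p}$, relabels $T$ \emph{backwards} to a preimage $(U,n) \in \AB{p'}$ with $\rho(U) = T$, concluding $(T,n) \in \rho(Y_i) \subseteq Y_i$. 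You instead fix the single canonical instance $(T',k)$, observe that mere pointwise membership $(T',k) \in Y_i$ already follows from $(T',k) \in \AB{p} \subseteq Y_1 \cup \cdots \cup Y_n$ (so you never need Lemma~\ref{lem:decomp-filter} here), and then push $(T',k)$ \emph{forwards}: a relabeling $\rho$ tailored to the target $(U,m)$ turns the pattern embedding $h$ into a genuine marked tree homomorphism $(\rho(T'),k) \to (U,m)$, after which $C$-invariance plus closure of open sets under homomorphic images (Prop.~\ref{prop:closed-hom}) finishes the argument. Your version is marginally more economical, using one point of the union rather than a whole basic open set, while the paper's version isolates the reusable containment fact $\AB{p'} \subseteq Y_i$; both correctly exploit the absence of descendant steps (which guarantees a single finite canonical instance) and the per-wildcard distinctness of the fresh labels (which is exactly what lets the relabeling act independently at each wildcard position, the point where a single shared fresh symbol would fail).
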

\begin{proof}
  Recall that we assume $\Sigma$ is infinite, so choose an infinite
  sequence $x_1,x_2,\ldots$ of elements of $\Sigma - C$.  Form a new
  path expression $p'$ from $p$ by replacing each $*$ occurring in $p$
  with a distinct $x_i$.  For example, if $p = /a/{*}[b/{*}]$ then $p' =
  /a/x_1[b/x_2]$.  Clearly, by construction $\AB{p'} \subseteq \AB{p}
  \subseteq Y_1 \cup \cdots \cup Y_n$.  Therefore, by
  Lem.~\ref{lem:decomp-filter} there must be some $i$ such that
  $\AB{p'} \subseteq Y_i$; fix such an $i$.  

  We now show that for any $(T,n) \in \AB{p}$ there exists a tree $U$
  and relabeling $\rho$ such that $(U,n) \in \AB{p'}$ and $\rho(U) =
  T$ and $\rho$ fixes $C$.  Let $(T,n) \in \rho$ be given, and let
  $\{m_1,\ldots,m_k\} = V_T$ be some enumeration of the $k$ vertices
  of $T$.  Let $(P,v)$ be a tree pattern corresponding to $p$, and let
  $h : (P,v) \to (T,n)$ be an embedding witnessing the fact that
  $(T,n)$ matches $p$.  Define $U$ and $\rho$ as follows:
\begin{eqnarray*}
 U &=& (V_T, E_T, R_T, \lambda')\\
  \lambda'(m_i)& = &\left\{\begin{array}{ll}
\lambda_T(m_i) & \exists m'.h(m') = m_i, \lambda_{P}(m') = a\\
x_i & \text{otherwise}
\end{array}\right.
\\
\rho(a) &=& \left\{\begin{array}{ll}
\lambda_T(m_i) & a = x_i\\
a & \text{otherwise}
\end{array}\right.
\end{eqnarray*}
That is, $U$ has the same nodes and edges as
  $T$, and its node labels are equal to
  those of $T$ for nodes that match a fixed label $a$ in $p$ (i.e.,
  when $h(m_i) = m'$ and $\lambda_{P}(m_i) = a$), and the
  labels of nodes $m_i$ matching occurrences of $*$ are reassigned to the
  corresponding $x_i$ in $p'$. Also, $\rho$ maps each $x_i$ to the
  corresponding label $a$ in $T$, so that by construction $\rho(U) =
  T$.    Since the
  $x_i$ are chosen from outside $C$ it follows that $\rho$ fixes $C$
  by construction.  

Now we show that $\AB{p} \subseteq Y_i$.  Let $(T,n)$ be an element of
$\AB{p}$ and let $U,\rho$ be constructed as above, so that $\rho(U) =
T$ and $(U,n) \in \AB{p'}$ and $\rho$ fixes $C$.  Clearly, $(U,n) \in
\AB{p'} \subseteq Y_i$.  Therefore, by the assumption that each $Y_i$ is
  $C$-invariant, we have that $(T,n) = (\rho(U),n) \in \rho\AB{p'}
  \subseteq \rho(Y_i)
  \subseteq Y_i$. 
\end{proof}

\begin{lemma}\label{lem:decomp}
  The containment problem $p \sqleq p_1 | \cdots | p_n$ satisfies
  union decomposition provided that $p \in \XPCSF$ and $p_i \in \XPCDSF$.
\end{lemma}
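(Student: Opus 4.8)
The plan is to derive this from Lem.~\ref{lem:decomp-invariant} by instantiating its abstract open sets $Y_i$ with the marked-tree semantics $\AB{p_i}$ of the right-hand disjuncts. First I would pass to the reformulated semantics: by the correspondence between $\sqleq$ and $\subseteq$ (part~1 of the reformulated-semantics lemma, read over unions), containment $p \sqleq p_1 | \cdots | p_n$ is equivalent to $\AB{p} \subseteq \AB{p_1} \cup \cdots \cup \AB{p_n}$, and each disjunct-containment $p \sqleq p_i$ is equivalent to $\AB{p} \subseteq \AB{p_i}$. Thus, setting $Y_i := \AB{p_i}$, the lemma becomes exactly $\AB{p} \subseteq \bigcup_i Y_i \iff \bigvee_i \AB{p} \subseteq Y_i$, which is the conclusion of Lem.~\ref{lem:decomp-invariant}. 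The reverse ($\Leftarrow$) direction is trivial in any case, so the real work lies in the forward direction, already packaged inside that lemma.

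To invoke Lem.~\ref{lem:decomp-invariant} I must verify its two hypotheses on the $Y_i$. Openness in $\tau_2$ is immediate: since each $p_i \in \XPCDSF$, the set $\AB{p_i}$ is open by the proposition establishing that every $\XPCDSF$-path denotes an open set in $\tau_2$. The invariance hypothesis is the only point requiring care, because Lem.~\ref{lem:decomp-invariant} demands a \emph{single} constant $C$ working simultaneously for all the $Y_i$, whereas Lem.~\ref{lem:invariant} only gives that $\AB{p_i}$ is $\labels(p_i)$-invariant for its own label set. I would resolve this by taking $C := \bigcup_{i=1}^n \labels(p_i)$ and observing that invariance weakens as the fixed set grows: if $\labels(p_i) \subseteq C$ then any relabeling $\rho$ fixing $C$ also fixes $\labels(p_i)$, so $\labels(p_i)$-invariance of $\AB{p_i}$ immediately upgrades to $C$-invariance. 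Hence every $Y_i$ is $C$-invariant for this common finite $C$.

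With both hypotheses verified and $p \in \XPCSF$ as required, Lem.~\ref{lem:decomp-invariant} yields $\AB{p} \subseteq \bigcup_i Y_i \iff \bigvee_i \AB{p} \subseteq Y_i$, and translating back through the marked-tree characterization of containment recovers $p \sqleq p_1 | \cdots | p_n \iff p \sqleq p_1 \vee \cdots \vee p \sqleq p_n$, i.e.\ union decomposition. I expect no genuine obstacle: the mathematical substance has been absorbed into Lem.~\ref{lem:decomp-invariant}, and what remains is the semantic reformulation plus the bookkeeping of merging the per-path invariance constants into one common $C$. That merging is the only place a careless argument could slip, since using each $p_i$'s own label set separately would not meet the single-$C$ precondition of the decomposition lemma.
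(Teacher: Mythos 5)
Your proposal is correct and follows essentially the same route as the paper's proof: instantiate Lem.~\ref{lem:decomp-invariant} with $Y_i = \AB{p_i}$, using Lem.~\ref{lem:invariant} to get $\labels(p_i)$-invariance and then upgrading to a common $C = \bigcup_i \labels(p_i)$ since invariance is preserved when the fixed set grows. Your version merely spells out the openness check and the semantic translation between $\sqleq$ and $\subseteq$, which the paper leaves implicit.
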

\begin{proof}
  From Lem.~\ref{lem:invariant} we have that all of the sets
  $\AB{p_i}$ are $\labels(p_i)$-invariant, so they are all
  $\bigcup_i \labels(p_i)$-invariant.  Thus, by
  Lem.~\ref{lem:decomp-invariant} we must have $\AB{p} \subseteq
  \AB{p_1} \cup \cdots \cup \AB{p_n}$ if and only if $\AB{p} \subseteq
  \AB{p_i}$ for some $i$.  This is equivalent to union decomposition
  for the problem $ p \sqleq p_1 | \cdots | p_n$. 
\end{proof}
Note that for this proof, the assumption that $\Sigma$ is
infinite was necessary: otherwise, if $\Sigma = \{a_1,\ldots,a_n\}$, then $/{*}
\sqleq /a_1 | \cdots | /a_n$ holds but does not satisfy union decomposition.

\begin{theorem}\label{thm:enforcement-ptime}
  Static enforcement of update capabilities in $\XPCSF$ is
  checkable in $\PTIME$ for any fixed policy $\policy$ over $\XPCDSF$.
\end{theorem}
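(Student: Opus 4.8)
The plan is to fix the policy $\policy=(\ds,\crp,\allowed,\denied)$ and reduce the question of whether an update capability $U$ with path $p\in\XPCSF$ is statically allowed to a bounded number of containment and overlap tests, arguing that each runs in polynomial time in $|p|$. The guiding observation is that \emph{every} containment problem we generate has a left-hand side with only a bounded number of descendant steps: either it is $p$ itself, which has none, or it is a term of an intersection $\AB{p}\cap\AB{d}$ with a negative rule $d$, whose descendant steps are all inherited from the fixed $d$. Since the policy is fixed, the star length of $\allowed$ is a constant, so the polynomial containment algorithms of Miklau and Suciu~\cite{miklau04jacm} apply throughout, and overlap tests are polynomial by~\cite{hammerschmidt05ideas}.

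I would proceed by case analysis on $(\ds,\crp)$. In the deny--deny case, $U$ is statically allowed iff $\AB{p}\subseteq\AB{\allowed}$ and $\AB{p}\cap\AB{\denied}=\emptyset$. The first condition is the containment $p\sqleq\allowed$ of $p$ in the union of positive rules; since $p\in\XPCSF$, Lemma~\ref{lem:decomp} lets me replace it by the disjunction $\bigvee_i p\sqleq a_i$ over the finitely many positive rules $a_i$, and each individual test is polynomial because the left-hand side lies in $\XPCSF$. The second condition is disjointness of $p$ from each of the constantly many negative rules, which is polynomial by overlap testing. The deny--allow and allow--deny cases are the degenerate specializations (with $\denied=\emptyset$, respectively $\allowed=\emptyset$) in which only the containment part, respectively only the disjointness part, survives, so they are handled by the same machinery.

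The main obstacle is the allow--allow case, where being statically allowed amounts to $\AB{p}\cap\AB{\denied}\subseteq\AB{\allowed}$, equivalently $\AB{p}\cap\AB{d}\subseteq\AB{\allowed}$ for each of the finitely many negative rules $d$. Here the left-hand side is an intersection rather than a bare $\XPCSF$ path, so Lemma~\ref{lem:decomp} does not apply directly and the descendant steps of $d$ reappear (for instance inside filters). The plan is to represent $\AB{p}\cap\AB{d}$ as a union $\bigcup_k\AB{V_k}$ of tree patterns obtained by a product/intersection construction in the style of Lemma~\ref{lem:intersection}; because $p$ contributes no descendant steps and $d$ is fixed, only finitely many interleavings arise, so there are polynomially many terms $V_k$, each carrying at most $|d|$, hence a constant number of, descendant steps. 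I would then test each $\AB{V_k}\subseteq\AB{\allowed}$ with the Miklau--Suciu bounded-descendant algorithm: since the number of descendant steps in $V_k$ and the star length of $\allowed$ are both constants for the fixed policy, only polynomially many canonical models of $V_k$ need be examined, and checking that each matches some positive rule is polynomial.

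The delicate point I expect to have to argue carefully is precisely this control of the intersection in the allow--allow case: that the product construction yields only polynomially many terms $V_k$, that every descendant step of every $V_k$ is inherited from the fixed $d$ (so the bound required for the canonical-model method is a constant), and that the bounded-descendant containment characterization remains complete when the right-hand side $\allowed$ is a \emph{union}. The remaining cases reduce cleanly to the already-established union decomposition (Lemma~\ref{lem:decomp}) and polynomial overlap testing, so once the allow--allow intersection is under control the uniform polynomial-time bound follows.
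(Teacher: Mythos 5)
Your proposal follows the paper's structure in the deny--deny, allow--deny and deny--allow cases essentially verbatim: reduce containment in $\AB{\allowed}$ to the individual tests $p \sqleq a_i$ via Lemma~\ref{lem:decomp}, and handle $\denied$ by polynomially many overlap tests. The divergence is in the allow--allow case. The paper disposes of it in one line by asserting that the intersection of the update path with a negative rule ``can be expressed by another path in $\XPCSF$'' and then reapplying Lemma~\ref{lem:decomp}; you instead decompose $\AB{p}\cap\AB{d}$ into polynomially many tree patterns $V_k$, each inheriting at most $|d|$ (hence constantly many) descendant steps from the fixed negative rule $d$, and then invoke the bounded-descendant containment machinery (in effect the proof of Corollary~\ref{cor:ptime}) to test $\AB{V_k}\subseteq\AB{\allowed}$. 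Your route is longer but is the more defensible one: the negative rules live in $\XPCDSF$, so their descendant steps genuinely survive into the intersection, and the paper's claim that the intersection stays inside $\XPCSF$ is only literally true when $d$ is descendant-free. What your approach buys is an argument that covers negative rules with descendant axes; what it costs is the two extra obligations you already flagged --- that the product construction yields only polynomially many terms for a fixed $d$ (it does: each of the constantly many pattern nodes of $d$ has polynomially many possible images inside or below $p$'s pattern, and descendant edges with both endpoints mapped into $p$'s descendant-free pattern collapse to explicit paths), and that the canonical-model completeness argument tolerates a union on the right-hand side, which is exactly what Corollary~\ref{cor:ptime} establishes via Lemmas~\ref{lem:decomp} and~\ref{lem:shrinking}. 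Provided you discharge those two points as planned, the argument is sound and in fact patches an imprecision in the paper's own treatment of the allow--allow case.
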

\begin{proof}
  We consider the deny--deny case where $\policy = (-,-,\allowed,\denied)$.
  Consider an update capability $U$ characterized by a path $p \in
  \XPCSF$.  We need to ensure that $\AB{p} \subseteq \AB{\allowed}$
  and $\AB{p} \cap \AB{\denied} = \emptyset$.  By Lemma~\ref{lem:decomp}
  the first part can be checked by testing whether $p \sqleq p_i$ for
  each $p_i \in \allowed$.  Each such test can be done in polynomial
  time since $p \in \XPCSF$. The second part amounts to checking
  that $p$ does not overlap with any element of $\denied$, which also takes
  polynomial time.

The allow--allow case is similar, but more involved.  Given $p$, we
first check whether it overlaps with any elements of $\denied$.  For each
$p' \in \denied$ such that  $p$
overlaps with $p'$, we need to check whether $\AB{p} \cap \AB{p'}
\subseteq \AB{\allowed}$.  The intersection of two paths in $\XPCSF$ can
be expressed by another path in $\XPCSF$ and this can be computed
in $\PTIME$; the required containment checks are also in $\PTIME$ as per
Lemma~\ref{lem:decomp}.  Hence, allow--allow policies can also be statically
enforced in \PTIME.  Other policies are special cases.
\end{proof}

Unfortunately, union decomposition does not hold for problems where $p
\in \XPCDS$.  For example, $/a//b \sqleq (/a/b) ~|~ (/a/{*}//b)$
holds, but neither $/a//b \sqleq /a/b$ nor $/a//b \sqleq /a/{*}//b$
holds.  In any case, even without union, containment of $\XPCDSF$
paths is \coNP-complete. As noted earlier, Miklau and Suciu showed
that containment is decidable in polynomial time if the number of
descendant steps in $p$ is bounded. Using an adaptation of this
result, together with Theorem~\ref{thm:enforcement-ptime}, we can extend
this result to handle problems of the form $p \sqleq p_1 | \cdots |
p_n$ where all paths are in $\XPCDSF$ and $p$ has at most $d$
descendant steps:

\begin{corollary}\label{cor:ptime}
    Static enforcement of update capabilities in $\XPCDSF$
    having at most $d$ descendant steps is
  checkable in $\PTIME$ for any policy $\policy$ over $\XPCDSF$.
\end{corollary}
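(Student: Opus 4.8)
The plan is to reduce the $\XPCDSF$ case to the $\XPCSF$ case handled by Theorem~\ref{thm:enforcement-ptime}, using the bounded-descendant-step hypothesis to eliminate descendant edges from the update capability $p$ while staying polynomial. The key observation is that a path $p \in \XPCDSF$ with at most $d$ descendant steps can be rewritten as a union of $\XPCSF$ paths by expanding each descendant edge into a bounded sequence of $\kchild::*$ steps. Specifically, following Miklau and Suciu, a containment problem $p \sqleq p_i$ (for $p_i \in \XPCDSF$) has a counterexample if and only if it has one in $mod^z_{w_i+1}(p)$, where $w_i$ is the star length of $p_i$; equivalently, $p \sqleq p_i$ holds iff each $(u_1,\ldots,u_d)$-extension of $p$ with each $u_j \le w_i+1$ is contained in $p_i$. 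Since $w_i \le |p_i| \le |\policy|$ and $d$ is fixed, the number of such extensions is $(w_i+2)^d$, a polynomial in $|\policy|$ for fixed $d$, and each extension lies in $\XPCSF$.

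First I would establish the rewriting: for $p \in \XPCDSF$ with $d$ descendant steps, let $W$ be the maximum star length of any path appearing in $\policy$, and define the finite set $E(p)$ of all $(u_1,\ldots,u_d)$-extensions of $p$ with each $u_j \le W+1$. Each element of $E(p)$ is a $\XPCSF$ path (all descendant edges having been replaced by chains of $\kchild::*$ steps), and $|E(p)| = (W+2)^d = O(|\policy|^d)$. I would then argue, via the marked-tree/homomorphism machinery and the counterexample bound from Miklau and Suciu, that $\AB{p} \subseteq \AB{\allowed}$ holds if and only if $\AB{p'} \subseteq \AB{\allowed}$ for every $p' \in E(p)$: the forward direction is immediate since each $\AB{p'} \subseteq \AB{p}$, and the reverse direction uses the fact that any counterexample to $p \sqleq p_i$ can be taken to have descendant-matching chains of length at most $W+1$, hence is already a counterexample to some extension $p' \in E(p)$.

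Having reduced to the $\XPCSF$ fragment, I would invoke Theorem~\ref{thm:enforcement-ptime}: for each of the polynomially many $p' \in E(p)$, static enforcement of $p'$ against $\policy$ is checkable in $\PTIME$. In the deny--deny case, $p$ is statically allowed iff every $p' \in E(p)$ satisfies $\AB{p'} \subseteq \AB{\allowed}$ and $p'$ does not overlap with $\denied$; the containment part goes through Lemma~\ref{lem:decomp} applied to each $\XPCSF$ path $p'$, and the overlap part is the polynomial-time overlap test. Summing over the $O(|\policy|^d)$ extensions keeps the total in $\PTIME$ for fixed $d$. The allow--allow case and the degenerate allow--deny, deny--allow cases follow by the same reduction combined with the corresponding arguments in Theorem~\ref{thm:enforcement-ptime}.

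The main obstacle I anticipate is the reverse direction of the rewriting equivalence, namely showing that it genuinely suffices to bound each extension length by $W+1$ rather than by something depending on $p$ itself. This requires care because the update capability $p$ sits on the \emph{left} of the containment, whereas Miklau and Suciu's bound is phrased in terms of the star length of the right-hand path; I would need to verify that their canonical-model and chain-shortening arguments (Lemma~\ref{lem:shrinking}(3)) apply uniformly across all $p_i \in \allowed$ using the single global bound $W$, and that shortening a $z$-chain in a counterexample to $p \sqleq p_i$ does not inadvertently create satisfaction of some other rule, exactly as in the counterexample-shrinking analysis of Theorem~\ref{thm:conp-decidable}. The overlap-side reasoning in the allow--allow case needs the analogous check that intersecting each extension $p'$ with an overlapping $\denied$-path stays in $\XPCSF$ and polynomial-size, which follows from closure of $\XPCSF$ under intersection noted in Theorem~\ref{thm:enforcement-ptime}.
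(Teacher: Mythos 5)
Your proposal is correct and follows essentially the same route as the paper: expand the $d$ descendant steps of $p$ into all $(u_1,\ldots,u_d)$-extensions with $u_j \le W+1$ (where $W$ is the maximum star length over the policy paths), yielding $(W+2)^d$ paths in $\XPCSF$, then decide each via Lemma~\ref{lem:decomp}/Theorem~\ref{thm:enforcement-ptime}, with the converse direction secured by the Miklau--Suciu counterexample bound (Lemma~\ref{lem:shrinking}). The concerns you flag about the reverse direction are exactly the points the paper discharges by appeal to that lemma, so no gap remains.
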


\begin{proof}
  Given a problem $p \sqleq p_1|\ldots|p_n$, consider all expansions
  $p[\bar{u}]$ where $u_i \leq W+1$, where $W$ is the maximum star
  length of the paths $p_1,\ldots,p_n$.  There are at most $m=(W+2)^d$
  such expansions where $d$ is the number of descendant steps in $p$.
  Each of these paths is in $\XPCSF$ so by Lem.~\ref{lem:decomp} we can check in \PTIME\  whether
  all are contained in $p_1|\cdots|p_n$.  If not, then clearly $p$
  itself is not contained in $p_1|\cdots|p_n$.  Conversely, if $p$ is
  not contained in $p_1|\cdots|p_n$, then (using
  Lem.~\ref{lem:shrinking}) we can find a small counterexample that
  matches one of the $p[\bar{u}]$, which implies that the algorithm
  will detect non-containment for this $p[\bar{u}]$.
\end{proof}



\section{Discussion}
\label{sec:discussion}

\subsection{Generalizations}

In the previous section we have simplified matters by considering only
deletion capabilities; we also have not discussed attribute equality
tests. Our framework extends to policies over the full language
$\XPFull$ and to policies consisting of multiple different kinds of
operations (insert, delete, rename, replace). To handle multiple kinds
of operations, we need to consider the topologies over the set of
pairs $(T,u)$ of trees and atomic operations $u \in
\AtomicUpdates(T)$. Attribute steps and value tests complicate
matters: because attribute values must be unique, the policy
$(-,-,\{/a[@b=c]\},\{/a[@b=d]\})$ is fair. Verifying this requires
taking the uniqueness constraint into account, or more generally,
testing containment or overlap modulo key constraints. This can be
done using more expressive logics for XPath over data
trees~\cite{david12tods,david13icdt} or general-purpose
solvers~\cite{geneves07pldi}. However, the \coNP-hardness proof given
earlier is not applicable if only attribute-based filters are allowed,
so it may be possible to check fairness in the presence of negative
attribute tests in \PTIME.

Fairness can also be affected by the presence of a DTD or schema that
constrains the possible trees.  It is easy to see that a policy that
is fair in the absence of a schema remains fair if we consider only
valid documents.  On the other hand, an unfair policy may become
fair in the presence of a schema or other constraints (as illustrated
above using attributes).  For example, the unfair policy from
Ex.~\ref{ex:unfair} becomes fair if the schema eliminates uncertainty
as to whether $a$ has a $b$ child.  This can happen either if $a$
cannot have any $b$ children or always has at least one.  However, checking containment and satisfiability often become more difficult when a DTD is present~\cite{neven06lmcs,benedikt08jacm}.

The fairness picture changes if we consider extensions to the XPath
operations allowed in the update requests.  For example, for
$\XPFull$-policies, it appears possible to recover fairness by
allowing negation in filter expressions (e.g.~$a[not(b)]$).  However,
XPath static analysis problems involving negation are typically not in
$\PTIME$~\cite{neven06lmcs,benedikt08jacm,tencate09jacm}.  Thus, there
is a tradeoff between the complexity of determining that a policy is
fair and the complexity of statically enforcing fairness, governed by
the expressiveness of the set $XP$ of XPath expressions allowed in
update capabilities.  The more expressive $XP$ is, the easier it is to
check fairness and the harder it is to enforce the policy.

\subsection{Implications}

Having established some technical results concerning policy fairness
and the complexity of determining fairness and of static enforcement
with respect to fair policies, what are the implications of these
results? We believe that there are three main messages:
\begin{itemize} 
\item Policies without filters are always fair.
  However, such policies may not be sufficiently expressive for
  realistic situations; for example, the policy in
  Figure~\ref{fig:policy} would become much too coarse if we removed
  the filters.
  Policies with filters only in positive rules are also always fair,
  and are more expressive; for example, the policy in
  Figure~\ref{fig:policy} is in this fragment.  Therefore, policy
  authors can easily ensure fairness by staying within this fragment.

\item Checking policy
  fairness for policies with
  filters in negative rules may be computationally intensive; it may
  be worthwhile investigating additional heuristics or static analyses
  that can detect fairness for common cases more efficiently. Also, as
  discussed in the previous section, it may be possible to check
  fairness for policies with negative attribute tests in \PTIME.

\item We
  established that for relatively tame update capabilities (with
  limited numbers of descendant axis steps), static enforcement
  remains in PTIME. Static enforcement depends directly on the
  complexity of checking containment and overlap problems. Containment
  checking $p \sqleq p'$ is not symmetric in $p$ and $p'$, so it may
  be profitable to investigate ways to make policies richer while
  retaining fairness with respect to less expressive classes of
  updates. 
\end{itemize}


\section{Related Work}
\label{sec:related}

Most prior work on enforcing fine-grained XML access control policies
has focused on dynamic enforcement strategies. As discussed in the
introduction, previous work on filtering, secure query evaluation and
security views has not addressed the problems that arise in update
access control, where it is important to decide whether an operation
is allowed before performing expensive  updates.

Murata et al.~\cite{murata06tissec} previously considered static
analysis techniques for rule-based policies, using regular expressions
to test inclusion in positive rules or possible overlap with negative
rules, but their approach provides no guarantee that static
enforcement is fair; their static analysis was used only as an
optimization to avoid dynamic checks.  Similarly, Koromilas et
al.~\cite{koromilas09sdm} employed static analysis techniques to speed
annotation maintenance in the presence of updates.  In contrast, our
approach entirely obviates dynamic checks.

The \emph{consistency} problem for XML update access control policies
involves determining that the policy cannot be circumvented by
simulating a forbidden operation through a sequence of allowed
operations.  Fundulaki and Maneth~\cite{fundulaki07sacmat} introduced
this problem and showed that it is undecidable for full XPath.
Moore~\cite{moore11ic} further investigated the complexity of special
cases of this problem. Bravo et al.~\cite{bravo12vldbj} studied
schema-based policies for which consistency is in \PTIME\ and also
investigated repair algorithms for inconsistent policies.  Jacquemard
and Rusinowitch~\cite{jacquemard10ppdp} studied complexity and
algorithms for consistency of policies with respect to richer classes
of schemas.  Fairness and consistency are orthogonal concerns.

Language-based security, particularly analysis of information flow, is
another security problem that has been studied
extensively~\cite{sabelfeld03sac}, including for XML
transformations~\cite{benzaken03asian}.  This paper considers only
classical access control (deciding whether to allow or deny actions
specified by a policy), a largely separate concern.  Thus, while our
approach draws on ideas familiar from language-based security such as
static analysis, the key problems for us are different.  Typically,
language-based information flow security aims to provide a
conservative upper bound on possible run-time behaviors of programs,
for example to provide a non-interference guarantee. Thus, sound
over-approximation is tolerable for information-flow security.  In
contrast, we wish to exactly enforce fine-grained access control
policies, so we need to consider exact static analyses and related
properties such as fairness.

\section{Conclusion}
\label{sec:concl}

Fine-grained, rule-based access control policies for XML data are
expensive to enforce by dynamically checking whether the update
complies with the rules.  In this paper, we advocate enforcement based
on static analysis, which is equivalent to dynamic enforcement when
the policy is fair.  We gave a novel topological characterization of
fairness, and used this
characterization to prove that for policies over $\XPCDSF$, all
policies without filters in negative rules are fair (with respect to
$\XPCF$, and fairness is decidable in \coNP-time.  

There are natural next steps for future work, including investigating
fairness for larger fragments of XPath or in the presence of schemas
or constraints on the data, and generalizing the approach to ordered
trees and the full complement of XPath axes.  Implementing and
evaluating the practicality of fair policy enforcement or fairness
checking is also of interest.  Finally, our approach places the burden
of finding an appropriate statically allowed $U$ that covers a desired
update $u$ on the user; it may be necessary to develop efficient
techniques for automating this process.



\paragraph{Acknowledgements} Thanks to Irini Fundulaki and Sebastian
Maneth for discussions on XPath access control policies.


\begin{thebibliography}{10}

\bibitem{ayyagari07ccs}
P.~Ayyagari, P.~Mitra, D.~Lee, P.~Liu, and W.-C. Lee.
\newblock Incremental adaptation of {XPath} access control views.
\newblock In {\em ASIACCS}, pages 105--116, 2007.

\bibitem{benedikt09dbpl}
M.~Benedikt and J.~Cheney.
\newblock Semantics, types and effects for {XML} updates.
\newblock In {\em DBPL}, number 5708 in LNCS, pages 1--17. Springer-Verlag,
  2009.

\bibitem{benedikt08jacm}
M.~Benedikt, W.~Fan, and F.~Geerts.
\newblock {XPath} satisfiability in the presence of {DTDs}.
\newblock {\em J. ACM}, 55(2):8:1--8:79, May 2008.

\bibitem{Benedikt03}
M.~Benedikt, W.~Fan, and G.~M. Kuper.
\newblock Structural properties of {XPath} fragments.
\newblock In {\em ICDT}, volume 2572 of {\em Lecture Notes in Computer
  Science}, pages 79--95. Springer, 2003.

\bibitem{benzaken03asian}
V.~Benzaken, M.~Burelle, and G.~Castagna.
\newblock Information flow security for {XML} transformations.
\newblock In {\em ASIAN}, pages 33--53, 2003.

\bibitem{bertino02tissec}
E.~Bertino and E.~Ferrari.
\newblock Secure and selective dissemination of {XML} documents.
\newblock {\em ACM TISSEC}, 5(3):290--331, 2002.

\bibitem{bravo12vldbj}
L.~Bravo, J.~Cheney, I.~Fundulaki, and R.~Segovia.
\newblock Consistency and repair for {XML} write-access control policies.
\newblock {\em VLDB Journal}, 21(6):843--867, December 2012.

\bibitem{cho02vldb}
S.~Cho, S.~Amer-Yahia, L.~V.~S. Lakshmanan, and D.~Srivastava.
\newblock Optimizing the secure evaluation of twig queries.
\newblock In {\em VLDB}, pages 490--501, 2002.

\bibitem{damiani02tissec}
E.~Damiani, S.~D.~C. di~Vimercati, S.~Paraboschi, and P.~Samarati.
\newblock A fine-grained access control system for {XML} documents.
\newblock {\em ACM TISSEC}, 5(2):169--202, 2002.

\bibitem{david13icdt}
C.~David, A.~Gheerbrant, L.~Libkin, and W.~Martens.
\newblock Containment of pattern-based queries over data trees.
\newblock In {\em Proceedings of the 16th International Conference on Database
  Theory}, ICDT '13, pages 201--212, New York, NY, USA, 2013. ACM.

\bibitem{david12tods}
C.~David, L.~Libkin, and T.~Tan.
\newblock Efficient reasoning about data trees via integer linear programming.
\newblock {\em ACM Trans. Database Syst.}, 37(3):19:1--19:28, Sept. 2012.

\bibitem{fan04sigmod}
W.~Fan, C.-Y. Chan, and M.~Garofalakis.
\newblock Secure {XML} querying with security views.
\newblock In {\em SIGMOD}, pages 587--598. ACM, 2004.

\bibitem{fisler05icse}
K.~Fisler, S.~Krishnamurthi, L.~A. Meyerovich, and M.~C. Tschantz.
\newblock Verification and change-impact analysis of access-control policies.
\newblock In {\em ICSE '05}, pages 196--205, New York, NY, USA, 2005. ACM.

\bibitem{fundulaki07sacmat}
I.~Fundulaki and S.~Maneth.
\newblock Formalizing {XML} access control for update operations.
\newblock In {\em SACMAT}, pages 169--174. ACM, 2007.

\bibitem{geneves07pldi}
P.~Genev{\`e}s, N.~Laya\"{\i}da, and A.~Schmitt.
\newblock Efficient static analysis of {XML} paths and types.
\newblock In {\em PLDI}, pages 342--351. ACM, 2007.

\bibitem{GottlobKPS05}
G.~Gottlob, C.~Koch, R.~Pichler, and L.~Segoufin.
\newblock The complexity of {XPath} query evaluation and {XML} typing.
\newblock {\em J. ACM}, 52(2):284--335, 2005.

\bibitem{hammerschmidt05ideas}
B.~Hammerschmidt, M.~Kempa, and V.~Linnemann.
\newblock {On the Intersection of {XPath} Expressions}.
\newblock In {\em IDEAS}, pages 49--57. IEEE, 2005.

\bibitem{jacquemard10ppdp}
F.~Jacquemard and M.~Rusinowitch.
\newblock Rewrite-based verification of {XML} updates.
\newblock In {\em PPDP '10}, pages 119--130, New York, NY, USA, 2010. ACM.

\bibitem{koromilas09sdm}
L.~Koromilas, G.~Chinis, I.~Fundulaki, and S.~Ioannidis.
\newblock Controlling access to {XML} documents over {XML} native and
  relational databases.
\newblock In {\em Workshop on Secure Data Management}, volume 5776 of {\em
  Lecture Notes in Computer Science}, pages 122--141. Springer, 2009.

\bibitem{kuper09ijis}
G.~M. Kuper, F.~Massacci, and N.~Rassadko.
\newblock Generalized {XML} security views.
\newblock {\em Int. J. Inf. Sec.}, 8(3):173--203, 2009.

\bibitem{luo04cikm}
B.~Luo, D.~Lee, W.-C. Lee, and P.~Liu.
\newblock {QFilter}: Fine-grained run-time {XML} access control via {NFA}-based
  query rewriting.
\newblock In {\em CIKM}, pages 543--552. ACM, 2004.

\bibitem{miklau04jacm}
G.~Miklau and D.~Suciu.
\newblock Containment and equivalence for a fragment of {XPath}.
\newblock {\em J. ACM}, 51(1):2--45, 2004.

\bibitem{moore11ic}
N.~Moore.
\newblock Computational complexity of the problem of tree generation under
  fine-grained access control policies.
\newblock {\em Inf. Comput.}, 209(3):548--567, Mar. 2011.

\bibitem{murata06tissec}
M.~Murata, A.~Tozawa, M.~Kudo, and S.~Hada.
\newblock {XML} access control using static analysis.
\newblock {\em ACM TISSEC}, 9(3):292--334, 2006.

\bibitem{neven06lmcs}
F.~Neven and T.~Schwentick.
\newblock On the complexity of {XPath} containment in the presence of
  disjunction, {DTDs}, and variables.
\newblock {\em Logical Methods in Computer Science}, 2(3), 2006.

\bibitem{qi05cikm}
N.~Qi, M.~Kudo, J.~Myllymaki, and H.~Pirahesh.
\newblock A function-based access control model for {XML} databases.
\newblock In {\em CIKM}, pages 115--122. ACM, 2005.

\bibitem{xquery-update}
J.~Robie, D.~Chamberlin, M.~Dyck, D.~Florescu, J.~Melton, and J.~Sim{\'e}on.
\newblock {XQuery} update facility 1.0.
\newblock \texttt{http://www.w3.org/TR/xquery-update-10/}, March 2011.

\bibitem{sabelfeld03sac}
A.~Sabelfeld and A.~C. Myers.
\newblock Language-based information-flow security.
\newblock {\em IEEE Journal on Selected Areas in Communications}, 21(1):5--19,
  2003.

\bibitem{stoica02ifip}
A.~Stoica and C.~Farkas.
\newblock Secure {XML} views.
\newblock In {\em IFIP WG 11.3}, volume 256 of {\em IFIP Conference
  Proceedings}. Kluwer, 2002.

\bibitem{tencate09jacm}
B.~{ten Cate} and C.~Lutz.
\newblock The complexity of query containment in expressive fragments of
  {XPath} 2.0.
\newblock {\em J. ACM}, 56(6):31:1--31:48, Sept. 2009.

\bibitem{wadler00:_two_seman_xpath}
P.~Wadler.
\newblock {Two Semantics for XPath}.
\newblock Technical report, Bell Laboratories, 2000.

\bibitem{yu04tods}
T.~Yu, D.~Srivastava, L.~V.~S. Lakshmanan, and H.~V. Jagadish.
\newblock A compressed accessibility map for {XML}.
\newblock {\em ACM TODS}, 29(2):363--402, 2004.

\end{thebibliography}

\newpage
\appendix
\section{Proofs}

\begin{proof}[Proof of Lem.~\ref{lem:intersection}]
  Suppose $(V',k) \in \AB{T,n} \cap \AB{U,m}$.  Then there must exist witnessing
  homomorphisms $ h_1 : (T,n) \to (V,k)$ and $h_2 : (U,m) \to (V,k)$.
  Without loss of generality, assume that $h_1$ and $h_2$ are
  injective and $rng(h_1) \cap rng(h_2)$
  consists only of the vertices between $R_V$ and $k$.  Observe that
  $h_1$ and $h_2$ are invertible when restricted to $rng(h_1) \cap
  rng(h_2)$.

  Construct $(V,k)$ from $(V',k)$ by deleting all subtrees that do not
  contain a node from $rng(h_1) \cup rng(h_2)$.  Observe that this
  implies that $V_V = rng(h_1) \cup rng(h_2)$.
To see that $\AB{V,k}
  = \AB{T,n} \cap \AB{U,m}$, the forward inclusion $\AB{V,k} \subseteq
  \AB{T,n} \cap \AB{U,m}$ is immediate.  Suppose $(W,l)
  \in \AB{T,n} \cup \AB{U,m}$, and suppose $h_1' : (T,n) \to (W,l)$
  and $h_2' : (U,m) \to (W,l)$ are homomorphisms witnessing this.  
  Choose a function $g : (V,k) \to (W,l)$ such that: 
\[g(x) = \left\{
  \begin{array}{ll}
    h_1'(y) & x \in rng(h_1), h_1(y) = x\\
    h_2'(z) & x \in rng(h_2) - rng(h_1), h_2(z) = x 
  \end{array}\right.
\]
We first show that $h_1' = g \circ h_1$ and $h_2' = g \circ h_2$. The
first equation is immediate; for the second, clearly $g(x) = h_2'(x)$
when $x \in rng(h_2) - rng(h_1)$.  If $x \in rng(h_1) \cap rng(h_1)$
then $x$ is between $R_V$ and $k$, so $h_1^{-1}(x) = \{y\}$ and
$h_2^{-1}(x) = \{z\}$ where $y$ and $z$ are in the corresponding
position on the paths between $R_T$ and $n$ and $R_U$ and $m$
respectively.  Thus, we must have that $g(z) = h_1'(y) = h_2'(z)$
because both $h_1'$ and $h_2'$ are homomorphisms.  

To show that $g$ is
a homomorphism, first $g(R_V) = h_1'(R_T) = R_W$.  Second, for any
edge $(v,w) \in E_V$, there are several cases to show that
$(g(v),g(w)) \in E_W$.  If $w \in rng(h_1)$ then clearly $v \in
rng(h_1)$ also, and $v = h_1(v'), w=h_1(w')$ where $(v,w) \in E_T$ by the injectivity
of $h_1$, so 
then $(g(v),g(w)) = (g(h_1(v')), g(h_1(w'))) = (h_1'(v'), h_1'(w'))
\in E_W$.  Similarly, if $w \in rng(h_2)$ we are done.  
Finally, for any $v \in V_V$, there are several cases to consider in
showing $\lambda_W(g(v)) = \lambda_V(v)$.  If $v \in
\rng(h_1)$ then suppose $v = h_1(v')$ for some $v' \in V_T$.  Then $\lambda_W(g(v)) =
\lambda_W(g(h_1(v'))) = \lambda_W(h_1'(v')) = \lambda_T(v') =
\lambda_V(h_1(v')) = \lambda_V(v)$.  The case for $v \in rng(h_2)$ is similar.
\end{proof}







\end{document}